%%%%%%%%%%%%%%%%%%%%%%%%%%%%%%%%%%%%%%%%%%%%%%%%%%%%%%%%%%%%%%%%%%%%%%%%%%%%%%%%
%2345678901234567890123456789012345678901234567890123456789012345678901234567890
%        1         2         3         4         5         6         7         8

\documentclass[letterpaper, 10 pt, conference]{ieeeconf}  % Comment this line out if you need a4paper

\IEEEoverridecommandlockouts                              % This command is only needed if
                                                          % you want to use the \thanks command

\overrideIEEEmargins                                      % Needed to meet printer requirements.

%In case you encounter the following error:
%Error 1010 The PDF file may be corrupt (unable to open PDF file) OR
%Error 1000 An error occurred while parsing a contents stream. Unable to analyze the PDF file.
%This is a known problem with pdfLaTeX conversion filter. The file cannot be opened with acrobat reader
%Please use one of the alternatives below to circumvent this error by uncommenting one or the other
%\pdfobjcompresslevel=0
%\pdfminorversion=4

% See the \addtolength command later in the file to balance the column lengths
% on the last page of the document

% The following packages can be found on http:\\www.ctan.org
%\let\proof\relax
%\let\endproof\relax
\usepackage{graphics} % for pdf, bitmapped graphics files
\usepackage{epsfig} % for postscript graphics files
\usepackage{times} % assumes new font selection scheme installed
\usepackage{amsmath} % assumes amsmath package installed
\usepackage{amssymb}  % assumes amsmath package installed
\usepackage{cite}
\usepackage{epstopdf}
\usepackage{subfigure,color,balance}
\usepackage{verbatim}
\usepackage{cases}
\usepackage{enumerate}
\usepackage{balance}

\usepackage[colorlinks=true,      % false: boxed links; true: colored links
linkcolor=black,      % color of internal links
citecolor=black,      % color of links to bibliography
filecolor=black,      % color of file links
urlcolor=blue]{hyperref}

\newtheorem{definition}{Definition}
\newtheorem{theorem}{Theorem}

\newtheorem{remark}{Remark}
\newtheorem{lemma}{Lemma}
\newtheorem{corollary}{Corollary}

\def\0{{\bf 0}}
\def\1{{\bf 1}}

\def\eg{{\em e.g.}}
\def\ie{{\em i.e.}}

%\def\tx{\tilde{\bf x}}
%\def\ty{\tilde{\bf y}}
%\def\tz{\tilde{\bf z}}
%\def\hd{\hat{d}}
%\def\HD{\hat{\bf D}}
%\def\hx{\hat{\bf x}}
%\def\hR{\hat{R}}

%\newtheorem{mydef}{Definition}[section]
%\newtheorem{mylem}{Lemma}[section]
%\newtheorem{myproof}{Proof}[section]
%\newtheorem{theorem}{\textbf{Theorem}}
%\newtheorem{lemma}{\textbf{Lemma}}
%\newtheorem{definition}{Definition}[section]
%\newtheorem{proposition}{\textbf{Proposition}}
%\newtheorem{cor}{\textbf{Corollary}}

%%%%%%%%%%%%%%%%%%%%%%%%%%%%%%%%%%%%%5
\title{\LARGE \bf
Leading Cruise Control in Mixed Traffic Flow
}

\author{Jiawei Wang$^1$, Yang Zheng$^{2*}$, Chaoyi Chen$^1$, Qing Xu$^1$ and Keqiang Li$^{1*}$% <-this % stops a space
\thanks{This work is supported by National Key R\&D Program of China with 2019YFB160080, Tsinghua University-Didi Joint Research Center for Future Mobility, and the Joint Laboratory for Internet of Vehicles, Ministry of Education-China Mobile Communications Corporation. All correspondence should be sent to Y.~Zheng and K.~Li.}% <-this % stops a space
\thanks{$^{1}$J.~Wang, C.~Chen, Q.~Xu and K.~Li are with the School of Vehicle and Mobility, Tsinghua University, Beijing, China, and with Tsinghua University-Didi Joint Research Center for Future Mobility, Beijing, China. (\{wang-jw18,chency19\}@mails.tsinghua.edu.cn, \{qingxu,likq\}@tsinghua.edu.cn).}%
\thanks{$^{2}$Y. Zheng is with the SEAS and CGBC at Harvard University, Cambridge, USA. ({zhengy@g.harvard.edu}).}%
%\thanks{$^{3}$H. X. Liu	is with DiDi Chuxing, Beijing, China. ({henryliu@didiglobal.com}).}%
}

\begin{document}

\maketitle
\thispagestyle{empty}
\pagestyle{empty}

%%%%%%%%%%%%%%%%%%%%%%%%%%%%%%%%%%%%%%%%%%%%%%%%%%%%%%%%%%%%%%%%%%%%%%%%%%%%%%%%
\begin{abstract}

Vehicle-to-vehicle (V2V) communications have a great potential to improve traffic system performance. Most existing work of connected and autonomous vehicles (CAVs) focused on adaptation to downstream traffic conditions, neglecting the impact of CAVs' behaviors on upstream traffic flow. In this paper, we introduce a notion of Leading Cruise Control (LCC) that retains the basic car-following operation and explicitly considers the influence of the CAV's actions on the vehicles behind. We first present a detailed modeling process for LCC. Then, rigorous controllability analysis verifies the feasibility of exploiting the CAV as a leader to actively lead the motion of its following vehicles. Besides, the head-to-tail transfer function is derived for LCC under adequate employment of V2V connectivity. Numerical studies confirm the potential of LCC to strengthen the capability of CAVs in suppressing traffic instabilities and smoothing traffic flow.

\end{abstract}

%%%%%%%%%%%%%%%%%%%%%%%%%%%%%%%%%%%%%%%%%%%%%%%%%%%%%%%%%%%%%%%%%%%%%%%%%%%%%%%%
\section{Introduction}

Vehicle-to-vehicle (V2V) and vehicle-to-infrastructure (V2I) communications have provided new opportunities for better traffic mobility and smoother traffic flow~\cite{contreras2017internet}. As a prevailing extension of Adaptive Cruise Control (ACC)~\cite{vahidi2003research}, Cooperative Adaptive Cruise Control (CACC) organizes multiple adjacent connected and autonomous vehicles (CAVs) into a platoon and has shown a great potential in reducing inter-vehicle distances and mitigating traffic perturbations~\cite{li2017dynamical} (see Fig.~\ref{Fig:SystemSchematic}(a)). Considering that a practical scenario in the near future is the coexistence of human-driven vehicles (HDVs) and CAVs, extending ACC to mixed traffic scenarios has received significant attention. One typical framework is the notion of Connected Cruise Control (CCC), which explicitly considers HDVs' car-following dynamics and determines control strategies for the CAV at the tail by monitoring the motion of multiple HDVs ahead~\cite{orosz2016connected} (see Fig.~\ref{Fig:SystemSchematic}(b)).

Similar to the decision-making characteristics of human drivers \cite{treiber2013traffic}, CCC-type controllers focus on monitoring the downstream traffic conditions consisting of the vehicles ahead, and aim at achieving a better car-following behavior; see, \eg, \cite{orosz2016connected,di2019cooperative,jin2014dynamics,zhou2020stabilizing}. It is worth noting that due to this kind of front-to-rear reaction dynamics of human drivers, the behavior of one individual vehicle will simultaneously have a significant impact on the upstream traffic flow containing the vehicles behind. One example from the negative side is that small perturbations might be amplified and grow into a stop-and-go pattern, causing great loss of efficiency and high risks of accidents \cite{sugiyama2008traffic}. The topic of how to enable CAVs to dissipate the perturbations coming from front has already gained widespread attention \cite{di2019cooperative,jin2014dynamics,zhou2020stabilizing}; however, the subsequent influence of the perturbations on the upstream traffic flow behind the CAV has not been clearly addressed. If such influence is explicitly incorporated into CAVs' controller design, the potential of CAVs could be further explored.

Along this direction, this paper introduces a new notion of Leading Cruise Control (LCC); see Fig.~\ref{Fig:SystemSchematic}(c) for illustration. Originating from ACC and CCC, LCC retains the basic car-following operation that adapts to the motion of the preceding vehicles. Meanwhile, LCC explicitly considers the impact of the behavior of an individual vehicle on its surrounding traffic, especially those HDVs following behind. Particularly, the CAV acts as a leader to actively lead the motion of the following vehicles without changing their natural behavior, and aims to improve the performance of the entire traffic flow. A similar idea is the recently proposed notion of Lagrangian control of traffic flow \cite{stern2018dissipation}, where CAVs are utilized as mobile actuators for traffic control. Existing research has revealed the potential of one single CAV in stabilizing a closed ring-road traffic system \cite{zheng2020smoothing,wang2020controllability,cui2017stabilizing}. Since the impact of the CAV propagates backwards, \ie, upstream in traffic flow, LCC generalizes these ring-road results to common open straight road scenarios, and enables the CAV to stabilize its following vehicles and improve their performance. This possibility has been recently investigated from a macroscopic view, where the CAV is employed as a moving bottleneck to influence the upstream traffic conditions \cite{piacentini2018traffic}. Unlike \cite{piacentini2018traffic}, LCC focuses on the microscopic car-following dynamics, which is more feasible for practical use in individual vehicles, as this has been achieved in ACC and CCC.

\begin{figure*}[t]
	\vspace{1mm}
	%\centering
	\hspace{1.5mm}
	\includegraphics[scale=0.45]{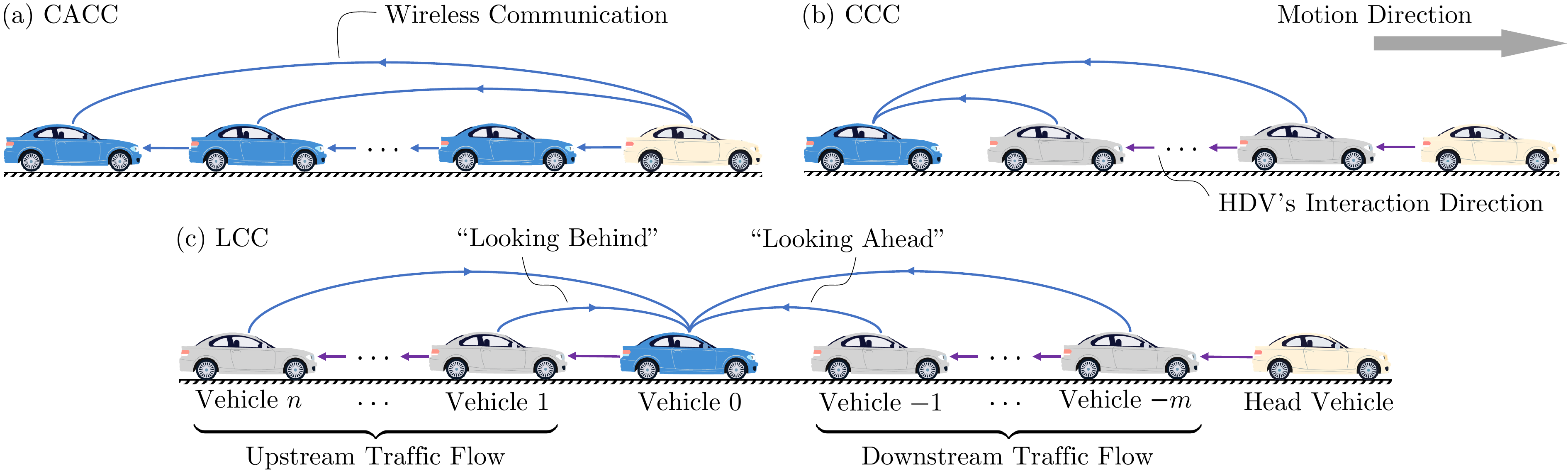}
	\vspace{-2mm}
	\caption{Schematic of different control frameworks for CAVs. The blue arrows represent the communication topology of the CAV, while the purple arrows illustrate the interaction direction in HDVs' dynamics. The blue vehicles, gray vehicles and yellow vehicles represent CAVs, HDVs and the head vehicle, respectively. (a) In a CACC platoon, multiple CAVs are controlled to follow a designated head vehicle. Here, we demonstrate a typical communication topology named predecessor-leader following (see~\cite{zheng2016stability} for other typical topologies). (b) CCC focuses on a mixed traffic scenario, where the CAV receives the information from multiple HDVs ahead~\cite{orosz2016connected}. (c) In LCC, the CAV explicitly considers the dynamics of both the vehicles ahead and the vehicles behind.}
	\label{Fig:SystemSchematic}
	\vspace{-5mm}
\end{figure*}

One significant feature of LCC is the explicit consideration of an individual vehicle as both a leader and a follower in traffic flow. It is known that in complex networks or multi-agent systems, there are two main topics: 1) the control of follower agents, targeting at tracking a prescribed trajectory of leader agents \cite{ding2013network,ni2010leader}; 2) the control of leader agents that act as control inputs to achieve desired performance for the entire system \cite{jafari2011leader,clark2013supermodular}. In multi-vehicle systems, however, most existing research, \eg, ACC, CACC and CCC, focuses on how to improve the behavior of the CAV as a follower agent, but neglects another role of the CAV, which is naturally a leader agent with regard to the vehicles behind, especially in mixed traffic flow. Another distinction of LCC from previous research is the adequate employment of V2V connectivity, where the information of both the HDVs ahead and those behind is utilized for controller design. This communication topology can be informally called as both ``looking ahead'' and ``looking behind'', as shown in Fig.~\ref{Fig:SystemSchematic}(c).

In this paper, we investigate the fundamental properties of the LCC framework, and our contributions are as follows.
\begin{itemize}
\item   We introduce a new notion of Leading Cruise Control. Based on linearized car-following models, a dynamical modeling framework of LCC is presented, and two special cases are discussed, which cover two fundamental driving behaviors of individual vehicles in traffic flow: car-following and free-driving \cite{treiber2013traffic}.
\item   We prove that the motion of the vehicles behind is completely controllable by the control input of the CAV under a very mild condition. These results generalize the previous stabilizability results in a closed ring-road system \cite{zheng2020smoothing,wang2020controllability,cui2017stabilizing}, and verify the possibility of traffic control via CAVs on the common open road.
\item   We investigate the head-to-tail string stability of the mixed traffic flow under the LCC framework. Numerical studies reveal that the CAV has more choices of string stable feedback policies, as well as a greater capability to suppress traffic instabilities, after ``looking behind'' compared with ``looking ahead'' only \cite{di2019cooperative,jin2014dynamics,zhou2020stabilizing}.

\end{itemize}

The remainder of this paper is organized as follows. Section \ref{Sec:2} introduces the modeling process of LCC. Section~\ref{Sec:3} presents the controllability analysis, and Section \ref{Sec:4} investigates the head-to-tail string stability. Finally, Section~\ref{Sec:5} concludes this paper.

\section{Theoretical Modeling Framework for LCC}
\label{Sec:2}

In this section, we first introduce the longitudinal dynamics of HDVs' car-following behavior. Then, we present the modeling framework of the general LCC system, as well as two special cases.

Consider an open single-lane setup, as shown in Fig.~\ref{Fig:SystemSchematic}(c). We index the CAV as vehicle $0$, and define $\mathcal{F}=\{1,2,\ldots,n\}$ and $\mathcal{P}=\{-1,-2,\ldots,-m\}$ as the set of the indexes of the following vehicles and the preceding vehicles that are connected to the CAV, respectively. The position, velocity and acceleration of vehicle $i$ is denoted as $p_i$, $v_i$ and $a_i$, respectively. The spacing of vehicle $i$ from its preceding vehicle is defined as $s_i=p_{i-1}-p_i$. There also exists a vehicle at the head of this series of vehicles, whose information is not received by the CAV, and its velocity is represented as $v_\mathrm{h}$. Without loss of generality, the vehicle length is ignored.

\subsection{Longitudinal Dynamics of HDVs}
We first model the longitudinal car-following dynamics of individual HDVs. There are many continuous-time models in the literature, \eg, optimal velocity model (OVM) \cite{bando1995dynamical}, intelligent driver model (IDM) \cite{Treiber2000Congested} and their variants. Most of them can be written as the following form \cite{jin2017optimal,cui2017stabilizing}
\begin{equation}\label{Eq:HDVModel}
	\dot{v}_i(t)=F\left(s_i(t),\dot{s}_i(t),v_i(t)\right),
\end{equation}
which means that the acceleration of vehicle $i$ depends on the relative distance, relative velocity $\dot{s}_i(t)=v_{i-1}(t)-v_i(t)$, and its own velocity. In equilibrium traffic state, each vehicle moves with the same equilibrium velocity $v^*$ and a corresponding equilibrium spacing $s^*$, which should satisfy
\begin{equation} \label{Eq:Equilibrium}
F\left(s^*,0,v^*\right)=0.
\end{equation}
Assuming that traffic flow is in near-equilibrium conditions, we define the error state of vehicle $i$ as
$
\tilde{s}_i(t)=s_i(t)-s^*,\tilde{v}_i(t)=v_i(t)-v^*.
$
Then we can derive a linearized second-order model for each HDV ($i\in \mathcal{F}\cup\mathcal{P}$) by using~\eqref{Eq:Equilibrium} and applying the first-order Taylor expansion to \eqref{Eq:HDVModel},
\begin{equation}\label{Eq:LinearHDVModel}
\begin{cases}
\dot{\tilde{s}}_i(t)=\tilde{v}_{i-1}(t)-\tilde{v}_i(t),\\
\dot{\tilde{v}}_i(t)=\alpha_{1}\tilde{s}_i(t)-\alpha_{2}\tilde{v}_i(t)+\alpha_{3}\tilde{v}_{i-1}(t),\\
\end{cases}
\end{equation}
with $\alpha_{1} = \frac{\partial F}{\partial s}, \alpha_{2} = \frac{\partial F}{\partial \dot{s}} - \frac{\partial F}{\partial v}, \alpha_{3} = \frac{\partial F}{\partial \dot{s}}$ evaluated at the equilibrium state ($s^*,v^*$). To reflect the real driving behavior, we have $\alpha_{1}>0, \alpha_{2}>\alpha_{3}>0$ \cite{cui2017stabilizing,jin2017optimal}.

One prevailing specific HDV model is the following nonlinear OVM model \cite{bando1995dynamical}, where the explicit expression for \eqref{Eq:HDVModel} becomes
\begin{equation} \label{Eq:OVMmodel}
F(\cdot)=\alpha\left(V\left(s_i(t)\right)-v_i(t)\right)+\beta\dot{s}_i(t).
\end{equation}
$V(s)$ is the spacing-dependent desired velocity of the human driver, typically given by a continuous piecewise function
\begin{equation}
V(s)=\begin{cases}
0, &s\le s_{\mathrm{st}};\\
\frac{v_{\max }}{2}\left(1-\cos (\pi\frac{s-s_{\mathrm{st}}}{s_{\mathrm{go}}-s_{\mathrm{st}}})\right), &s_{\mathrm{st}}<s<s_{\mathrm{go}};\\
v_{\max}, &s\ge s_{\mathrm{go}}.
\end{cases}
\end{equation}
This means that the desired velocity becomes zero for a small spacing $s_{\mathrm{st}}$, and reaches a maximum value $v_{\max}$ for a large spacing $s_{\mathrm{go}}$. When $s_{\mathrm{st}}<s<s_{\mathrm{go}}$, the desired velocity increases monotonically.
%increasing function $f_v (s)$. A typical nonlinear form of $f_v (s)$ is as follows
%\begin{equation}
%f_{v}(s)=\frac{v_{\max }}{2}\left(1-\cos (\pi\frac{s-s_{\mathrm{st}}}{s_{\mathrm{go}}-s_{\mathrm{st}}})\right).
%\end{equation}
Upon using the OVM model \eqref{Eq:OVMmodel}, the specific expression for the coefficients in the linearized dynamics \eqref{Eq:LinearHDVModel} is given by $\alpha_1 = \alpha \dot{V}(s^*),\alpha_2=\alpha+\beta,\alpha_3=\beta$, where $\dot{V}(s^*)$ denotes the derivative of $V(s)$ at $s^*$.

\subsection{Modeling the General LCC System}
%\vspace{-1mm}

 The longitudinal dynamics of the CAV can be expressed in the following second-order form
\begin{equation}\label{Eq:LinearCAVModel}
\begin{cases}
\dot{\tilde{s}}_0(t)=\tilde{v}_{-1}(t)-\tilde{v}_0(t),\\
\dot{\tilde{v}}_0(t)=u(t),\\
\end{cases}
\end{equation}
where the acceleration signal of the CAV can be designed according to explicit strategies, which also acts as the control input of the LCC system. Define the global state of LCC as
\begin{equation*}\label{Eq:LCCstate}
x(t)=\begin{bmatrix}\tilde{s}_{-m}(t),\tilde{v}_{-m}(t),\ldots,\tilde{s}_{0}(t),\tilde{v}_{0}(t),\ldots,\tilde{s}_{n}(t),\tilde{v}_{n}(t)\end{bmatrix}^T.
\end{equation*}
Then, based on the linearized HDVs' car-following model~\eqref{Eq:LinearHDVModel} and the CAV's dynamics~\eqref{Eq:LinearCAVModel}, the linearized state-space model for the LCC system is obtained
\begin{equation} \label{Eq:LCCSystemModel}
	\dot{x}(t)=Ax(t)+Bu(t)+H\tilde{v}_{\mathrm{h}}(t).
\end{equation}
The velocity error of the head vehicle $\tilde{v}_{\mathrm{h}}(t)$ can be regarded as an external disturbance signal into the system. The coefficient matrices $A\in \mathbb{R}^{(2n+2m+2)\times(2n+2m+2)}$, $B,H\in \mathbb{R}^{(2n+2m+2)\times 1}$ are given by
\begin{gather*}
A=\begin{bmatrix} P_1 & & & & &  \\
P_2 & P_1 & &  & &  \\
& \ddots& \ddots&  &  &\\
& & P_2& P_1 &  &\\
& & &  S_2 & S_1\\
& & & & P_2 &P_1\\
&&&&&\ddots&\ddots\\
&&&&&&P_2&P_1
\end{bmatrix},\\
B = \begin{bmatrix}
b_{-m}^T,\ldots,b_{-1}^T,b_0^T,b_1^T,\ldots,b_n^T
\end{bmatrix}^T,\\
H = \begin{bmatrix}
h_{-m}^T,\ldots,h_{-1}^T,h_0^T,h_1^T,\ldots,h_n^T
\end{bmatrix}^T,
\end{gather*}
where ($i\in \mathcal{F}\cup \mathcal{P}$, $j\in \{0\}\cup\mathcal{F}\cup \mathcal{P} \backslash \{-m\}$)
\begin{align*}
	&P_{1} = \begin{bmatrix} 0 & -1 \\ \alpha_{1} & -\alpha_{2} \end{bmatrix},
	P_{2} = \begin{bmatrix} 0 & 1 \\ 0 & \alpha_{3} \end{bmatrix};\,
	b_0 = \begin{bmatrix} 0  \\ 1 \end{bmatrix},
	b_i = \begin{bmatrix} 0 \\ 0 \end{bmatrix};\\
&	S_1 = \begin{bmatrix} 0 & -1 \\ 0 & 0 \end{bmatrix},
	S_2 = \begin{bmatrix} 0 & 1 \\ 0 & 0 \end{bmatrix};	\,
	h_{-m} = \begin{bmatrix} 1 \\ \alpha_{3}\end{bmatrix},
	h_j = \begin{bmatrix} 0 \\ 0 \end{bmatrix}.
\end{align*}

\begin{remark}
Limited perception abilities and imperfect reaction characteristics of human drivers have been recognized as crucial causes of traffic instabilities and even traffic jams~\cite{sugiyama2008traffic}. Wireless communications greatly enlarge the perception range for individual vehicles, but most current frameworks for CAVs in mixed traffic flow have been limited to focusing on the state of the HDVs ahead. One straightforward distinction of LCC lies in the explicit consideration of those vehicles behind. To highlight this distinction, we present two special cases of the general LCC system in the following.
\end{remark}

\begin{figure}[t]
	\vspace{1mm}
	\centering
	\includegraphics[scale=0.45]{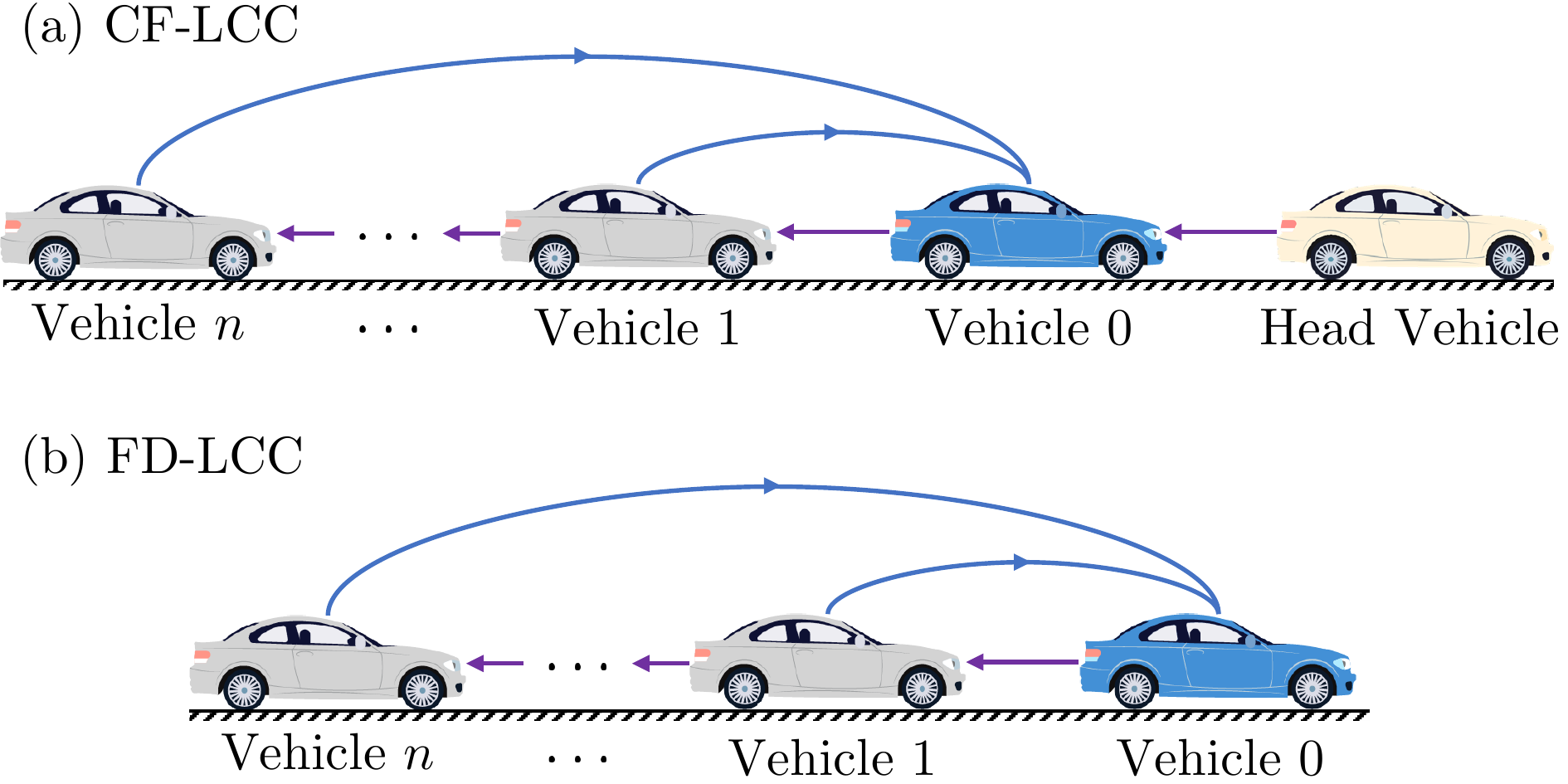}
	\vspace{-4mm}
	\caption{Two special cases of LCC when $\mathcal{P}=\emptyset$. (a) In CF-LCC, the CAV adopts the HDV's strategy \eqref{Eq:LinearHDVModel} to follow one single preceding vehicle, \ie, the head vehicle. (b) In FD-LCC, there are no preceding vehicles.}
	\label{Fig:FDLCC_CFLCC}
	\vspace{-5mm}
\end{figure}

\subsection{Two Special Cases of LCC}

It is known that the longitudinal behavior of individual vehicles in traffic flow includes two fundamental categories: car-following and free-driving \cite{treiber2013traffic}. Accordingly, we also present two special LCC systems, Car-Following LCC and Free-Driving LCC, as demonstrated in Fig.~\ref{Fig:FDLCC_CFLCC}. In both systems, we assume that the CAV receives no information from the preceding HDVs via V2V, \ie, $\mathcal{P}=\emptyset$, and focus on the LCC's ``looking behind'' characteristics.

\emph{Special Case 1:} Car-Following LCC System (CF-LCC)

In the first case, we assume that the CAV adopts the HDVs' control strategy \eqref{Eq:LinearHDVModel} to follow one single preceding vehicle, \ie, the head vehicle. Meanwhile, we also apply an additional control input $\hat{u} (t)$ into the CAV, which is determined by the state of the vehicles behind. The longitudinal dynamics of the CAV can thus be expressed by
\begin{equation}\label{Eq:CFLCC_CAVModel}
\begin{cases}
\dot{\tilde{s}}_0(t)=\tilde{v}_{\mathrm{h}}(t)-\tilde{v}_0(t),\\
\dot{\tilde{v}}_0(t)=\alpha_{1}\tilde{s}_0(t)-\alpha_{2}\tilde{v}_0(t)+\alpha_{3}\tilde{v}_{\mathrm{h}}(t)+\hat{u}(t).\\
\end{cases}
\end{equation}
In this case, the global state of the CF-LCC system degrades from that in \eqref{Eq:LCCSystemModel} to %the following form
$\label{Eq:CFLCCstate}
x_{\mathrm{c}}(t)=[\tilde{s}_{0}(t),\tilde{v}_{0}(t),\tilde{s}_{1}(t),\tilde{v}_{1}(t),$ $\ldots,\tilde{s}_{n}(t),\tilde{v}_{n}(t)]^T.
$
Then the linearized state-space model for the CF-LCC system is presented as follows
\begin{equation} \label{Eq:CFLCCSystemModel}
\dot{x}_{\mathrm{c}}(t)=A_{\mathrm{c}}x_\mathrm{c}(t)+B_1 \hat{u}(t)+H_1 \tilde{v}_{\mathrm{h}}(t).
\end{equation}
The coefficient matrices $A_{\mathrm{c}}\in \mathbb{R}^{(2n+2)\times(2n+2)}$, $B_1,H_1\in \mathbb{R}^{(2n+2)\times 1}$ are given by
\begin{equation*}
	A_{\mathrm{c}}=\begin{bmatrix} P_1 & & & \\
	P_2 & P_1 & &    \\
	& \ddots& \ddots&  \\
	& & P_2& P_1 \\
	\end{bmatrix},\,
	B_1=\begin{bmatrix}
	0\\1\\0\\ \vdots \\0
	\end{bmatrix},\,
	H_1=\begin{bmatrix}
	1\\ \alpha_3 \\0\\ \vdots \\0
	\end{bmatrix}.
\end{equation*}

\emph{Special Case 2:} Free-Driving LCC System (FD-LCC)

The second case is to assume that the CAV is driving freely with no vehicles ahead. Consequently, the CAV's spacing $s_0$ has no real-world meaning and we consider its position instead as a state variable. Then, the longitudinal dynamics of the CAV can be given by a simple second-order integral model as follows
\begin{equation}
\begin{cases}
\dot{{p}}_0(t)=\tilde{v}_0(t),\\
\dot{\tilde{v}}_0(t)=u(t).\\
\end{cases}
\end{equation}
Defining the global state of the FD-LCC system as
$\label{Eq:FDLCCstate}
x_{\mathrm{f}}(t)=\begin{bmatrix}-{p}_{0}(t),\tilde{v}_{0}(t),\tilde{s}_{1}(t),\tilde{v}_{1}(t)\ldots,\tilde{s}_{n}(t),\tilde{v}_{n}(t)\end{bmatrix}^T,
$
where the negative sign exits for consistency with \eqref{Eq:LinearCAVModel} and \eqref{Eq:CFLCC_CAVModel}, then the linearized state-space model for the FD-LCC system is consequently obtained
\begin{equation} \label{Eq:FDLCCSystemModel}
\dot{x}_{\mathrm{f}}(t)=A_{\mathrm{f}}x_\mathrm{f}(t)+B_1 u(t),
\end{equation}
where $A_{\mathrm{f}}\in \mathbb{R}^{(2n+2)\times(2n+2)}$ is given by
\begin{equation*} \label{Eq:AfExpression}
A_{\mathrm{f}}=\begin{bmatrix} S_1 & & & \\
P_2 & P_1 & &    \\
& \ddots& \ddots&  \\
& & P_2& P_1 \\
\end{bmatrix}.
\end{equation*}

\section{Controllability of LCC Systems}
\label{Sec:3}

In this section, we analyze the controllability property of the LCC systems. According to the front-to-rear reaction dynamics of human drivers (see the purple arrows in Fig.~\ref{Fig:SystemSchematic}(c)), it is easy to understand that the CAV's behavior will have certain influence on its following HDVs, and our particular interest is to understand such influence from a control theoretic perspective.

As a fundamental property of dynamical systems, controllability is an essential metric to quantify the influence of the control input on the entire system. The formal definition is as follows.

\begin{definition}[Controllability \rm{\cite{skogestad2007multivariable}}]
	The dynamical system $\dot{x}=Ax+Bu$ is controllable, if for any initial state $x(0)=x_0$, any time $t_f>0$ and any final state $x_f$, there exists an input $u(t)$ such that $x(t_f)=x_f$.
\end{definition}

\begin{lemma}[PBH controllability test \rm{\cite{skogestad2007multivariable}}] \label{Lemma:PBH}
System $(A,B)$ is controllable, if and only if  $\left[  \lambda I-A,B \right]$  is of full row rank for all  $\lambda$  being an eigenvalue of  $ A$, where $I$ denotes an identity matrix with compatible dimension.
\end{lemma}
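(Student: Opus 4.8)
The plan is to establish the PBH test by relating it to the classical Kalman rank condition, namely that $(A,B)$ with $A\in\mathbb{R}^{N\times N}$ is controllable if and only if the controllability matrix $\mathcal{C}=[\,B,\,AB,\,\ldots,\,A^{N-1}B\,]$ has full row rank $N$. I would take this Kalman criterion as the known characterization of controllability (it follows from expressing the reachable set through the matrix exponential and expanding $e^{At}B$ in powers of $A$ via the Cayley--Hamilton theorem) and then show it is equivalent to the rank condition on $[\lambda I-A,\,B]$. A useful preliminary observation is that whenever $\lambda$ is \emph{not} an eigenvalue of $A$, the block $\lambda I-A$ is already invertible, so $[\lambda I-A,\,B]$ automatically has full row rank; this is precisely why it suffices to test only the eigenvalues of $A$, as the statement asserts.

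For the first (easy) direction I would argue by contraposition. Suppose $[\lambda I-A,\,B]$ drops rank for some eigenvalue $\lambda$. Then there is a nonzero row vector $w^{*}$ with $w^{*}(\lambda I-A)=0$ and $w^{*}B=0$; the first relation says $w^{*}A=\lambda w^{*}$, i.e. $w^{*}$ is a left eigenvector of $A$. A short induction then gives $w^{*}A^{k}B=\lambda^{k}w^{*}B=0$ for every $k\ge 0$, so $w^{*}\mathcal{C}=0$ and $\mathcal{C}$ cannot have full row rank; hence $(A,B)$ is uncontrollable.

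The converse direction is where the real work lies, and I expect it to be the main obstacle. Assuming $(A,B)$ is uncontrollable, I must \emph{produce} an eigenvalue $\lambda$ of $A$ together with a left eigenvector $w^{*}$ lying in the left null space of $\mathcal{C}$. The key idea is to show that the left null space $\mathcal{N}=\{w : w^{*}\mathcal{C}=0\}$---which is nonzero precisely because $\mathcal{C}$ is rank-deficient---is invariant under $A^{*}$. This invariance holds because $w^{*}A^{k}B=0$ for $k=0,\ldots,N-1$ already forces $w^{*}A^{k}B=0$ for \emph{all} $k$ by Cayley--Hamilton, so $w^{*}A$ again annihilates every column block of $\mathcal{C}$. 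A nonzero $A^{*}$-invariant subspace of $\mathbb{C}^{N}$ must contain an eigenvector of $A^{*}$; taking such a $w\in\mathcal{N}$ yields $w^{*}A=\lambda w^{*}$ with $\lambda$ an eigenvalue of $A$, while membership in $\mathcal{N}$ forces $w^{*}B=0$. Together these give $w^{*}[\lambda I-A,\,B]=0$, so the rank condition fails at an eigenvalue of $A$, which completes the contrapositive. The subtlety to handle carefully is working over $\mathbb{C}$ rather than $\mathbb{R}$ (so that the invariant subspace genuinely contains an eigenvector) and keeping the conjugate-transpose bookkeeping straight when translating between eigenvectors of $A^{*}$ and left eigenvectors of $A$.
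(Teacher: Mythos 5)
Your argument is correct, and it is the standard textbook proof of the PBH test: the easy direction via a left eigenvector annihilating every block $A^kB$ of the Kalman controllability matrix, and the converse via the $A^{*}$-invariance (by Cayley--Hamilton) of the left null space of $\mathcal{C}$, which over $\mathbb{C}$ must contain an eigenvector. Note, however, that the paper does not prove this lemma at all --- it is quoted directly from the cited reference \cite{skogestad2007multivariable} and used as a black box in the proof of Theorem~1 --- so there is no in-paper argument to compare against. Your two flagged subtleties (working over $\mathbb{C}$, and the conjugation when passing between eigenvectors of $A^{*}$ and left eigenvectors of $A$) are exactly the right points to be careful about, and handling them as you describe closes the argument.
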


We first focus on the special case, the FD-LCC system, where there only exist several HDVs that follow behind the CAV. Our first result concerning the controllability of the LCC frameworks is as follows.

\begin{theorem} \label{Theorem:FDLCCControllability}
The FD-LCC system with no vehicle ahead and  $n$  HDVs behind given by \eqref{Eq:FDLCCSystemModel} is completely controllable, if the following condition \eqref{Eq:ControllabilityCondition} holds
\begin{equation}
\vspace{2mm}
\label{Eq:ControllabilityCondition}
	\alpha _{1}- \alpha _{2} \alpha _{3}+ \alpha _{3}^{2} \neq 0.
\end{equation}
\end{theorem}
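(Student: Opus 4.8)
The plan is to invoke the PBH test (Lemma~\ref{Lemma:PBH}) for the pair $(A_{\mathrm f},B_1)$ in \eqref{Eq:FDLCCSystemModel}. Full row rank of $[\lambda I-A_{\mathrm f},B_1]$ fails at $\lambda$ exactly when there is a nonzero row vector $w^T$ with $w^T(\lambda I-A_{\mathrm f})=0$ and $w^TB_1=0$, i.e.\ a left eigenvector $w$ of $A_{\mathrm f}$ orthogonal to $B_1=[0,1,0,\ldots,0]^T$. Partitioning $w=[w_0^T,w_1^T,\ldots,w_n^T]^T$ into $2\times1$ blocks conformably with $A_{\mathrm f}$, the constraint $w^TB_1=0$ just says the second entry of the top block $w_0$ is zero. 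So the whole proof reduces to showing that, under \eqref{Eq:ControllabilityCondition}, every left eigenvector of $A_{\mathrm f}$ has a nonzero second entry in its top block.

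I would exploit that $A_{\mathrm f}$ is block lower triangular with diagonal blocks $S_1,P_1,\ldots,P_1$, so its eigenvalues are $0$ (from $S_1$) together with the roots of $p(\lambda):=\lambda^2+\alpha_2\lambda+\alpha_1=\det(\lambda I-P_1)$; since $\alpha_1>0$, the value $0$ is not a root of $p$, so the two families are disjoint. Expanding $w^TA_{\mathrm f}=\lambda w^T$ block by block gives the cascade $w_n^T(P_1-\lambda I)=0$, $w_k^T(P_1-\lambda I)=-w_{k+1}^TP_2$ for $1\le k\le n-1$, and $w_0^T(S_1-\lambda I)=-w_1^TP_2$. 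For $\lambda=0$, invertibility of $P_1$ (as $\det P_1=\alpha_1>0$) propagates $w_n=\cdots=w_1=0$ up the chain, after which $w_0^TS_1=0$ leaves $w_0=[0,c]^T$; its second entry is $c\neq0$, so this eigenvalue never obstructs PBH, independently of the hypothesis.

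The substantive case is $\lambda$ a root of $p$ (then $\lambda\neq0$). Here $(S_1-\lambda I)$ is invertible while $(P_1-\lambda I)$ is singular of rank one with row space $\operatorname{span}\{[\lambda,1]\}$. Solving the cascade from the bottom, each intermediate equation is solvable only if its right-hand side $-w_{k+1}^TP_2$ lies in that row space; computing $w_{k+1}^TP_2$ with $w_{k+1}$ a left eigenvector of $P_1$ shows the right-hand side is a multiple of $[0,\alpha_3-\alpha_2-\lambda]$, which lies in $\operatorname{span}\{[\lambda,1]\}$ only when it vanishes. This forces $w_n=\cdots=w_2=0$, leaves $w_1$ proportional to the left eigenvector $[-(\alpha_2+\lambda),1]^T$ of $P_1$, and determines $w_0^T=-w_1^TP_2(S_1-\lambda I)^{-1}$. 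A direct computation then yields $w_0=[0,(\alpha_3-\alpha_2-\lambda)/\lambda]^T$, so its second entry vanishes precisely when $\lambda=\alpha_3-\alpha_2$.

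The hard part, and the only place \eqref{Eq:ControllabilityCondition} is used, is excluding this single degenerate value. The key observation I would record is the identity $p(\alpha_3-\alpha_2)=\alpha_1-\alpha_2\alpha_3+\alpha_3^2$ (equal to $p(-\alpha_3)$, since $-\alpha_3$ and $\alpha_3-\alpha_2$ are symmetric about the vertex $-\alpha_2/2$ of $p$). Thus \eqref{Eq:ControllabilityCondition} says exactly that $\alpha_3-\alpha_2$ is not a root of $p$, hence not an eigenvalue of $A_{\mathrm f}$; so for every genuine eigenvalue $\lambda$ we have $\lambda\neq\alpha_3-\alpha_2$ and the second entry of $w_0$ is nonzero. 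Combining both cases, $[\lambda I-A_{\mathrm f},B_1]$ has full row rank at every eigenvalue, and Lemma~\ref{Lemma:PBH} gives complete controllability. The one loose end is a repeated root of $p$ (when $\alpha_2^2=4\alpha_1$), but there $P_1-\lambda I$ is still rank one with the same row space, so the cascade argument and the closed form for $w_0$ go through unchanged.
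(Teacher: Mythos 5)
Your proposal is correct and follows essentially the same route as the paper: both apply the PBH test to a block-partitioned left null vector of $A_{\mathrm f}$, propagate the resulting cascade of $2\times 2$ block equations upward from the tail, and split on whether $\lambda$ is a root of $\lambda^2+\alpha_2\lambda+\alpha_1$. The only cosmetic difference is that you phrase the degeneracy excluded by \eqref{Eq:ControllabilityCondition} as $\lambda=\alpha_3-\alpha_2$, while the paper uses the equivalent condition $\alpha_3\lambda+\alpha_1\neq 0$ for roots of that quadratic.
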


\begin{proof}
Assume that the FD-LCC system \eqref{Eq:FDLCCSystemModel} is not completely controllable. According to Lemma \ref{Lemma:PBH}, this assumption indicates that there exists an eigenvalue  \(  \lambda  \)  of  \( A_{\mathrm{f}} \)  such that  \(  \left[  \lambda I-A_{\mathrm{f}},B_{1} \right]  \)  is not of full rank. Hence, there exists a nonzero vector  \(  \rho  \)  such that  \begin{subequations}
	\begin{gather}
		\rho ^{T} \left(  \lambda I-A_{\mathrm{f}} \right) =0; \label{Eq:RhoEquationA}\\
		\rho ^{T}B_{1}=0.  \label{Eq:RhoEquationB}
	\end{gather}
\end{subequations}
Denote  $\rho$  as
$
\rho = \begin{bmatrix}  \rho _{0}^{T}, \rho _{1}^{T}, \rho _{2}^{T}, \ldots , \rho _{n}^{T} \end{bmatrix}^{T},
$
where $\rho _{i}= \begin{bmatrix}  \rho _{i1}, \rho _{i2} \end{bmatrix}^{T} \in \mathbb{R}^{2 \times 1}$, $i=0,1, \ldots ,n $. Since only the second element in  $B_{1}$  is nonzero, \eqref{Eq:RhoEquationB} leads to  $\rho _{02}=0$. Substituting the expression of  $A_{\mathrm{f}}$ in \eqref{Eq:FDLCCSystemModel} into \eqref{Eq:RhoEquationA}, we have ($i \in \{1, \ldots ,n-1\}$)
\begin{subequations}
	\begin{align}
	 \rho _{0}^{T} \left( S_{1}- \lambda I \right) + \rho _{1}^{T}P_{2}&=0; \label{Eq:LittleRhoEquationA}\\
	\rho _{i}^{T} \left( P_{1}- \lambda I \right) + \rho _{i+1}^{T}P_{2}&=0;
	\label{Eq:LittleRhoEquationB}\\
\rho _{n}^{T} \left( P_{1}- \lambda I \right) &=0. \label{Eq:LittleRhoEquationC}
\end{align}
\end{subequations}

\emph{Case 1:} $\lambda ^{2}+ \alpha _{2} \lambda + \alpha _{1} \neq 0 $. Then,  \( P_{1}- \lambda I\)  is nonsingular. According to \eqref{Eq:LittleRhoEquationC}, we have  \(  \rho _{n}^{T}=0 \). Substituting  \(  \rho _{n}^{T}=0 \)  into \eqref{Eq:LittleRhoEquationB}, we have  \(  \rho _{n-1}^{T}=0 \). Using \eqref{Eq:LittleRhoEquationB} recursively, we can obtain that  \(  \rho _{i}^{T}=0,\,i=1, \ldots ,n \), which also leads to  \(  \rho _{0}^{T} \left( S_{1}- \lambda I \right) =0 \). Expanding this equation, we have  \(  \rho _{01}+ \lambda  \rho _{02}=0 \). Since  \(  \rho _{02}=0\), it is obtained that  \(  \rho _{01}=0 \). Consequently, we arrive at  \(  \rho =0 \), which contradicts the condition that  \(  \rho  \)  is a nonzero vector.

\emph{Case 2:} $\lambda ^{2}+ \alpha _{2} \lambda + \alpha _{1} = 0 $. In this case, we have  \(  \lambda  \neq 0 \)  since  \(  \alpha _{1}>0 \). Also, it can be obtained that  \(  \alpha _{3} \lambda + \alpha _{1} \neq 0 \); otherwise, condition \eqref{Eq:ControllabilityCondition} will be contradicted. Expanding \eqref{Eq:LittleRhoEquationA} leads to  \(  \lambda  \rho _{01}=0 \)  and  \(  \rho _{01}+ \lambda  \rho _{02}+ \rho _{11}+ \alpha _{3} \rho _{12}=0 \). Hence, we have  \(  \rho _{01}=0 \)  and  \(  \rho _{11}+ \alpha _{3} \rho _{12}=0 \). Meanwhile, letting  \( i=1 \)  and expanding \eqref{Eq:LittleRhoEquationB} yields  \(  \rho _{12}=\frac{ \lambda }{ \alpha _{1}} \rho _{11} \), which, combined with  \(  \rho _{11}+ \alpha _{3} \rho _{12}=0 \), leads to  \(  \rho _{11}= \rho _{12}=0 \). Letting  \( i=2, \ldots ,n \)  and expanding \eqref{Eq:LittleRhoEquationB} and \eqref{Eq:LittleRhoEquationC}, we can obtain the following results (\( i=2, \ldots ,n \))
\begin{subequations}
\begin{align}
\lambda  \rho _{i1}- \alpha _{1} \rho _{i2}&=0, \label{Eq:ControllabilityCase2EquationA}\\
\left(  \lambda ^{2}+ \alpha _{2} \lambda + \alpha _{1} \right)  \rho _{ \left( i-1 \right) 1}&=\left(  \alpha _{3} \lambda + \alpha _{1} \right)  \rho _{i1} . \label{Eq:ControllabilityCase2EquationB}
\end{align}
\end{subequations}
Since  \(  \alpha _{3} \lambda + \alpha _{1} \neq 0\), substituting  \(  \lambda ^{2}+ \alpha _{2} \lambda + \alpha _{1}=0 \) into~\eqref{Eq:ControllabilityCase2EquationB} and then \eqref{Eq:ControllabilityCase2EquationA} yields  \(  \rho _{i1}= \rho _{i2}=0,\,i=2, \ldots ,n \). Accordingly, we arrive at  \(  \rho =0 \), which contradicts  \(  \rho  \neq 0\).

In summary, the assumption does not hold, and we can conclude that the FD-LCC system \eqref{Eq:FDLCCSystemModel} is controllable. This completes the proof of Theorem \ref{Theorem:FDLCCControllability}.
\end{proof}

Note that \eqref{Eq:ControllabilityCondition} is a sufficient condition. At the random choice of  \(  \alpha _{1}, \alpha _{2}, \alpha _{3} \), condition \eqref{Eq:ControllabilityCondition} is satisfied with probability one; accordingly, FD-LCC is controllable with probability one. %, which means that the closed-loop poles of the FD-LCC system can be placed arbitrarily.
Regarding the controllability of the CF-LCC system \eqref{Eq:CFLCCSystemModel} and the general LCC system \eqref{Eq:LCCSystemModel}, we have the following two results. Due to page limit, a detailed proof will be presented in an extended version.

\begin{corollary} \label{Corollary:CFLCCControllability}
	The CF-LCC system where the CAV adopts the HDVs' dynamics \eqref{Eq:LinearHDVModel} to follow one vehicle ahead and considers $n$ HDVs behind given by \eqref{Eq:CFLCCSystemModel} is completely controllable, if the condition \eqref{Eq:ControllabilityCondition} holds.
\end{corollary}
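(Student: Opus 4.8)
The plan is to adapt the PBH argument used for Theorem~\ref{Theorem:FDLCCControllability}, exploiting the fact that $A_{\mathrm{c}}$ in~\eqref{Eq:CFLCCSystemModel} differs from $A_{\mathrm{f}}$ in~\eqref{Eq:FDLCCSystemModel} only in the top-left block (here $P_1$, there $S_1$), while the input vector $B_1$ is identical. First I would suppose, for contradiction, that the CF-LCC system is not controllable. By Lemma~\ref{Lemma:PBH} there is then an eigenvalue $\lambda$ of $A_{\mathrm{c}}$ and a nonzero $\rho=[\rho_0^T,\rho_1^T,\ldots,\rho_n^T]^T$ with $\rho_i=[\rho_{i1},\rho_{i2}]^T$ satisfying $\rho^T(\lambda I-A_{\mathrm{c}})=0$ and $\rho^T B_1=0$. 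The second condition again yields $\rho_{02}=0$, since only the second entry of $B_1$ is nonzero.

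Next I would expand $\rho^T(\lambda I-A_{\mathrm{c}})=0$ block by block. Because the leading block is now $P_1$ rather than $S_1$, the resulting relations are uniform: $\rho_i^T(P_1-\lambda I)+\rho_{i+1}^T P_2=0$ for $i=0,\ldots,n-1$, together with $\rho_n^T(P_1-\lambda I)=0$. I would then split on whether $\det(P_1-\lambda I)=\lambda^2+\alpha_2\lambda+\alpha_1$ vanishes, exactly as in the proof of Theorem~\ref{Theorem:FDLCCControllability}. In the nonsingular case $\lambda^2+\alpha_2\lambda+\alpha_1\neq0$, the last relation forces $\rho_n=0$, and back-substitution up the bidiagonal chain gives $\rho_i=0$ for all $i$, contradicting $\rho\neq0$.

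In the singular case $\lambda^2+\alpha_2\lambda+\alpha_1=0$, I would first note $\lambda\neq0$ (as $\alpha_1>0$) and, crucially, that condition~\eqref{Eq:ControllabilityCondition} guarantees $\alpha_3\lambda+\alpha_1\neq0$: if it were zero, eliminating $\lambda$ between $\alpha_3\lambda+\alpha_1=0$ and $\lambda^2+\alpha_2\lambda+\alpha_1=0$ reproduces $\alpha_1-\alpha_2\alpha_3+\alpha_3^2=0$, which is excluded. The first scalar component of every block relation reads $\lambda\rho_{i1}=\alpha_1\rho_{i2}$ (including $i=n$); combined with $\rho_{02}=0$ and $\lambda\neq0$ this gives $\rho_0=0$. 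I would then induct forward: assuming $\rho_i=0$, the second scalar component of the $i$-th relation collapses to $\rho_{(i+1)1}+\alpha_3\rho_{(i+1)2}=0$, and combining this with $\lambda\rho_{(i+1)1}=\alpha_1\rho_{(i+1)2}$ and $\alpha_3\lambda+\alpha_1\neq0$ forces $\rho_{i+1}=0$. This drives $\rho=0$, again a contradiction.

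The routine bookkeeping (assembling the block equations and the back-substitution of Case~1) is mechanical. The main obstacle, as in Theorem~\ref{Theorem:FDLCCControllability}, is the singular Case~2: one must pin down the precise role of~\eqref{Eq:ControllabilityCondition} through the auxiliary inequality $\alpha_3\lambda+\alpha_1\neq0$ and set up the forward induction so that each block vanishes in turn. The one genuinely new point relative to Theorem~\ref{Theorem:FDLCCControllability} is that replacing $S_1$ by $P_1$ makes the recursion homogeneous, so that $\rho_0=0$ now follows directly from the first-component relation rather than requiring separate treatment of the leading block.
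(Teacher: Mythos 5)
Your proof is correct. Every step checks out: the block expansion of $\rho^T(\lambda I-A_{\mathrm{c}})=0$ is indeed uniform because $A_{\mathrm{c}}$ has $P_1$ in every diagonal block; the first column of $P_2$ being zero makes the first-component relation $\lambda\rho_{i1}=\alpha_1\rho_{i2}$ hold for every block including $i=0$; and the forward induction in the singular case correctly reduces to $(\alpha_3\lambda+\alpha_1)\rho_{(i+1)2}=0$, which is where condition \eqref{Eq:ControllabilityCondition} enters exactly as you describe. However, the route differs from the one the paper intends. The paper defers the detailed proof to an extended version and indicates that the corollary follows from Theorem~\ref{Theorem:FDLCCControllability} by \emph{controllability invariance under state feedback}: since $A_{\mathrm{c}}=A_{\mathrm{f}}+B_1K$ with $K=[\alpha_1,\,-\alpha_2,\,0,\ldots,0]$ (this is precisely the replacement of the top-left block $S_1$ by $P_1$, as the extra terms $\alpha_1\tilde{s}_0-\alpha_2\tilde{v}_0$ in \eqref{Eq:CFLCC_CAVModel} are a feedback of the CAV's own state through the input channel), the pair $(A_{\mathrm{c}},B_1)$ is controllable if and only if $(A_{\mathrm{f}},B_1)$ is, and the corollary is a one-line consequence of the theorem. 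Your direct PBH computation buys a self-contained argument that is, as you note, actually cleaner than the theorem's own Case~2 (the recursion is homogeneous, so $\rho_0=0$ falls out of the first-component relation rather than needing separate treatment of the leading block); what it costs is redoing work that the feedback-invariance observation dispatches immediately, and it obscures the structural reason the two systems share the same controllability condition.
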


\begin{theorem} \label{Theorem:LCCControllability}
Consider the general LCC system with $m$ vehicles ahead and $n$ HDVs behind given by \eqref{Eq:LCCSystemModel}. The following statements hold:
\begin{enumerate}
	\vspace{-1.5mm}
	\item The subsystem consisting of the states of the vehicles ahead, \ie,  \( \tilde{s}_{-m} \left( t \right) ,\tilde{v}_{-m} \left( t \right) , \ldots ,\tilde{s}_{-1} \left( t \right) ,\tilde{v}_{-1} \left( t \right)  \)  is uncontrollable.
	\item The subsystem consisting of the states of the CAV and the vehicles behind, \ie,  $ \tilde{s}_{0} \left( t \right) ,\tilde{v}_{0} \left( t \right) ,$   $\tilde{s}_{1} \left( t \right) ,\tilde{v}_{1} \left( t \right) , \ldots ,\tilde{s}_{n} \left( t \right) ,\tilde{v}_{n} \left( t \right)$, is controllable, if the condition \eqref{Eq:ControllabilityCondition} holds.
\end{enumerate}
\end{theorem}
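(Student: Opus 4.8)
The plan is to exploit the block lower-triangular structure that the front-to-rear car-following coupling imposes on $A$, which already places the system in a Kalman controllability decomposition with no change of coordinates required. First I would partition the global state as $x=[x_{\mathrm{a}}^T,x_{\mathrm{s}}^T]^T$, where $x_{\mathrm{a}}=[\tilde{s}_{-m},\tilde{v}_{-m},\ldots,\tilde{s}_{-1},\tilde{v}_{-1}]^T$ collects the preceding-vehicle states and $x_{\mathrm{s}}=[\tilde{s}_0,\tilde{v}_0,\ldots,\tilde{s}_n,\tilde{v}_n]^T$ collects the CAV together with its followers. Because each HDV reacts only to the vehicle immediately ahead, the evolution of $x_{\mathrm{a}}$ depends neither on $u$ nor on any state in $x_{\mathrm{s}}$; the only cross-coupling is that the CAV spacing $\tilde{s}_0$ is driven by $\tilde{v}_{-1}$ through the block $S_2$. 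Rewriting $A$ and $B$ in this partition therefore gives
\begin{equation*}
A=\begin{bmatrix} A_{11} & 0 \\ A_{21} & A_{22} \end{bmatrix},\quad B=\begin{bmatrix} 0 \\ B_1 \end{bmatrix},
\end{equation*}
where $A_{22}$ is precisely the FD-LCC matrix $A_{\mathrm{f}}$ of \eqref{Eq:FDLCCSystemModel} and $A_{21}$ carries the single $S_2$ entry.

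The core observation for both statements is that this block pattern propagates through every power of $A$ acting on $B$. A straightforward induction on $k$ shows that $A^kB$ vanishes in all coordinates of $x_{\mathrm{a}}$ and equals $A_{22}^kB_1$ in the coordinates of $x_{\mathrm{s}}$. Consequently the controllability matrix $[\,B,AB,\ldots,A^{2n+2m+1}B\,]$ has identically zero rows in the $x_{\mathrm{a}}$ block, so the controllable subspace is contained in $\{x_{\mathrm{a}}=0\}$. This immediately settles statement~(1): the preceding-vehicle coordinates lie entirely in the uncontrollable part, and indeed their dynamics $\dot{x}_{\mathrm{a}}=A_{11}x_{\mathrm{a}}$ are autonomous and untouched by $u$.

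For statement~(2), I would note that the nonzero lower block of the controllability matrix is, after invoking Cayley--Hamilton to truncate at power $2n+1$, exactly the controllability matrix of the pair $(A_{22},B_1)=(A_{\mathrm{f}},B_1)$. Since this pair is identical to the FD-LCC system already treated, Theorem~\ref{Theorem:FDLCCControllability} guarantees that it is completely controllable whenever condition \eqref{Eq:ControllabilityCondition} holds. Hence the controllable subspace is exactly $\{x_{\mathrm{a}}=0\}$, its projection onto $x_{\mathrm{s}}$ is surjective, and the subsystem formed by the CAV and the following vehicles is controllable under the same condition.

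The main obstacle is conceptual rather than computational: one must verify that the lower-right block $A_{22}$ coincides \emph{exactly} with $A_{\mathrm{f}}$, so that Theorem~\ref{Theorem:FDLCCControllability} can be reused verbatim. In particular, replacing the CAV position state $-p_0$ of FD-LCC by the spacing state $\tilde{s}_0$ of general LCC must leave the relevant diagonal block $S_1$ and the first follower coupling $P_2$ unchanged, and the extra term $A_{21}$ feeding $\tilde{s}_0$ from $\tilde{v}_{-1}$ must sit strictly outside the $(A_{22},B_1)$ pair. Once this identification is secured, the argument is simply the Kalman decomposition read off from the native coordinates, and no new PBH-style case analysis is needed.
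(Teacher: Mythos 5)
Your proposal is correct and follows exactly the route the paper indicates for this result (the proof itself is deferred to an extended version, but the stated strategy is a controllability decomposition reducing statement~(2) to Theorem~\ref{Theorem:FDLCCControllability}): the block lower-triangular structure with $A_{12}=0$ and the zero upper block of $B$ give statement~(1) immediately, and the lower-right pair is the FD-LCC pair. The one identification you flag does hold: $\dot{\tilde{s}}_0(t)=\tilde{v}_{-1}(t)-\tilde{v}_0(t)$ contributes $S_1$ to the diagonal block and relegates the $\tilde{v}_{-1}$ coupling to $A_{21}$ via $S_2$, so $A_{22}=A_{\mathrm{f}}$ exactly and Theorem~\ref{Theorem:FDLCCControllability} applies verbatim under condition \eqref{Eq:ControllabilityCondition}.
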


\begin{comment}
Corollary \ref{Corollary:CFLCCControllability} can be obtained from Theorem \ref{Theorem:FDLCCControllability} based on the controllability invariance property under state feedback, while the proof of Theorem \ref{Theorem:LCCControllability} is similar to that of Theorem \ref{Theorem:FDLCCControllability} according to controllability decomposition.
\end{comment}

\begin{remark}
The physical interpretation of Theorem \ref{Theorem:LCCControllability} is that the control input of the CAV has no influence on the preceding HDVs, but has complete control of the motion of the following HDVs. In CCC-type frameworks, the only controllable part is the state of the CAV itself, \ie, $\tilde{s} _0 (t),\tilde{v} _0 (t)$~\cite{jin2017optimal}; consequently, the control objective of CCC is limited to improving the performance of the CAV's own car-following behavior. By contrast, the controllability of the state of the vehicles behind allows the CAV to act as \emph{a sophisticated leader with global consideration}, \eg, aiming to improve the performance of the entire upstream traffic flow. Note that this result also generalizes the stabilizability results in the closed ring-road traffic system \cite{zheng2020smoothing,wang2020controllability}, where it has been shown that one single CAV can stabilize the entire traffic flow.
\end{remark}

\section{Head-to-Tail String Stability}
\label{Sec:4}

The controllability analysis reveals the potential of the CAV to actively lead the motion of the following HDVs. As for the vehicles ahead, the CAV's capability in dampening front perturbations has received considerable attention in existing research, where string stability is a significant notion to describe this capability. Along this direction, we study the string stability performance of the mixed traffic flow under the proposed LCC framework in this section.

\subsection{Head-to-Tail Transfer Function}

In the local control of an individual vehicle, string stability depicts its ability in attenuating velocity fluctuations coming from the vehicle immediately ahead. For a series of vehicles, head-to-tail string stability is utilized more often, which is defined as follows.

\begin{definition}[Head-to-Tail String Stability \rm{\cite{jin2014dynamics}}]
	 \label{Def:HeadtoTail}
Given a series of consecutive vehicles, denote the velocity deviation of the vehicle at the head and the one at the tail as  $ \tilde{v}_\mathrm{h} \left( t \right) $  and  \( \tilde{v}_\mathrm{t} \left( t \right)  \) , respectively. The head-to-tail transfer function is defined as
\begin{equation} \label{Eq:GeneralTransferFunction}
	\Gamma (s) = \frac{\widetilde{V}_\mathrm{t} (s)  }{\widetilde{V}_\mathrm{h} (s) },
\end{equation}
where $\widetilde{V}_\mathrm{h}(s), \widetilde{V}_\mathrm{t}(s)$ denote the Laplace transform of  $ \tilde{v}_\mathrm{h} (t) $  and  $ \tilde{v}_\mathrm{t} (t) $, respectively. Then head-to-tail string stability holds if and only if
\begin{equation} \label{Eq:HeadtoTailDefinition}
	\vert  \Gamma  \left( j \omega  \right)  \vert ^{2}<1,  \forall  \omega >0,
\end{equation}
where  $ j=\sqrt[]{-1} $, and $\vert \cdot \vert  $ denotes the modulus.
\end{definition}

Definition \ref{Def:HeadtoTail} shows that head-to-tail string stability describes a property in a series of vehicles where the perturbation signals are attenuated between the head and the tail vehicles for all excitation frequencies. When head-to-tail string stability is violated, a small perturbation in the head vehicle might cause severe stop-and-go behaviors in the following vehicles, causing great loss of travel efficiency and high risk of traffic accidents.

We then proceed to investigate the head-to-tail string stability property of LCC. As shown in Fig.~\ref{Fig:SystemSchematic}(c), we consider a general scenario with $m$ preceding HDVs and $n$ following HDVs. The velocity perturbation $\tilde{v}_\mathrm{h} (t)$ of the head vehicle and the velocity perturbation $\tilde{v}_{n} (t)$ of the HDV at the very tail are regarded as the input and the output, respectively.

Based on the Laplace transform of the linearized car-following model \eqref{Eq:LinearHDVModel} of HDVs, the local transfer function of HDVs' dynamics is obtained as follows ($i \in \mathcal{F} \cup \mathcal{P}$)
\begin{equation}\label{Eq:HDVTransferFunction}
\frac{\widetilde{V}_{i} \left( s \right) }{\widetilde{V}_{i-1} \left( s \right) }=\frac{ \alpha _{3}s+ \alpha _{1}}{s^{2}+ \alpha _{2}s+ \alpha _{1}}=\frac{ \varphi  \left( s \right) }{ \gamma  \left( s \right) },
\end{equation}
with
$
\varphi  \left( s \right) = \alpha _{3}s+ \alpha _{1},  \gamma  \left( s \right) =s^{2}+ \alpha _{2}s+ \alpha _{1}.
$

As for the CAV, we assume that it adopts the HDVs' strategy \eqref{Eq:LinearHDVModel} to follow the vehicle immediately ahead, while also exploiting the state of surrounding HDVs for feedback control. Denote  $  \mu _{i},k_{i} $  as the feedback gain corresponding to the spacing error and the velocity error of vehicle  $ i $  ($ i \in \mathcal{F} \cup \mathcal{P} $), respectively. Then,  the control input can be expressed as
\begin{equation}\label{Eq:ControlInputDefinition}
u \left( t \right) = \alpha _{1}\tilde{s}_{0}- \alpha _{2}\tilde{v}_{0}+ \alpha _{3}\tilde{v}_{-1}
+  \sum _{i\in \mathcal{F} \cup \mathcal{P}} \left(  \mu _{i}\tilde{s}_{i} \left( t \right) +k_{i}\tilde{v}_{i} \left( t \right)  \right).
\end{equation}
%When  $\mu _{i}=k_{i}=0$, $i \in \mathcal{F} \cup \mathcal{P}$, this control strategy degrades to the car-following dynamics of HDVs.
Substituting \eqref{Eq:ControlInputDefinition} into the CAV's longitudinal dynamics \eqref{Eq:LinearCAVModel} and combining the Laplace transform of \eqref{Eq:LinearHDVModel} and \eqref{Eq:LinearCAVModel}, we can obtain the head-to-tail transfer function of the LCC system as follows
\begin{equation}\label{Eq:LCCTransferFunction}
\Gamma  \left( s \right) =\frac{ \varphi  \left( s \right) + \sum _{i\in \mathcal{P}}H_{i} \left( s \right)  ( \frac{ \varphi  \left( s \right) }{ \gamma  \left( s \right) } ) ^{i+1}}{ \gamma  \left( s \right) - \sum_{i \in \mathcal{F}} H_{i} \left( s \right)  ( \frac{ \varphi  \left( s \right) }{ \gamma  \left( s \right) } ) ^{i}} \cdot \left( \frac{ \varphi  \left( s \right) }{ \gamma  \left( s \right) } \right) ^{n+m},
\end{equation}
where
\begin{comment}
\begin{equation}\label{Eq:CAVTransferFunction}
G \left( s \right) =\frac{ \varphi  \left( s \right) + \sum _{i\in \mathcal{P}}H_{i} \left( s \right)  ( \frac{ \varphi  \left( s \right) }{ \gamma  \left( s \right) } ) ^{i+1}}{ \gamma  \left( s \right) - \sum_{i \in \mathcal{F}} H_{i} \left( s \right)  ( \frac{ \varphi  \left( s \right) }{ \gamma  \left( s \right) } ) ^{i}},
\end{equation}
with
\end{comment}
$
	H_{i} \left( s \right) = \mu _{i}\left(\gamma (  s ) / \varphi  ( s ) - 1\right)+k_{i}s$, $i \in \mathcal{F} \cup \mathcal{P}
$.

We consider the following special cases to facilitate the understanding of the transfer function \eqref{Eq:LCCTransferFunction}.

	1) When  $\mu _{i}=k_{i}=0$, $i \in\mathcal{F} \cup \mathcal{P}$ , the CAV follows the same control strategy as that of the human drivers. The head-to-tail transfer function then degrades to
	\begin{equation}
	\Gamma _{1} \left( s \right) = \left( \frac{ \varphi  \left( s \right) }{ \gamma  \left( s \right) } \right) ^{n+m+1},
	\end{equation}
	corresponding to a platoon of  $n+m+1$  HDVs.
	
	2) When $\mu _{i}=k_{i}=0$, $i \in \mathcal{F} $, \ie, the CAV only exploits the information of multiple vehicles ahead for longitudinal control, corresponding to a typical CCC-type strategy, the head-to-tail transfer function then becomes
	\begin{equation} \label{Eq:AheadTransferFunction}
	\Gamma_2  \left( s \right) =\frac{ \varphi  \left( s \right) + \sum _{i\in \mathcal{P}}H_{i} \left( s \right)  ( \frac{ \varphi  \left( s \right) }{ \gamma  \left( s \right) } ) ^{i+1}}{ \gamma  \left( s \right)} \cdot \left( \frac{ \varphi  \left( s \right) }{ \gamma  \left( s \right) } \right) ^{n+m}.
	\end{equation}

3) When  $ \mu _{i}=k_{i}=0$, $i \in \mathcal{P}$, the CAV adopts a same strategy as that of HDVs to follow the head vehicle, and also monitors multiple HDVs behind to adjust its own motion. In this case, the head-to-tail transfer function becomes
\begin{equation} \label{Eq:BehindTransferFunction}
\Gamma_3  \left( s \right) =\frac{ \varphi  \left( s \right)  }{ \gamma  \left( s \right)- \sum_{i \in \mathcal{F}} H_{i} \left( s \right)  ( \frac{ \varphi  \left( s \right) }{ \gamma  \left( s \right) } ) ^{i}} \cdot \left( \frac{ \varphi  \left( s \right) }{ \gamma  \left( s \right) } \right) ^{n+m}.
\end{equation}

\begin{figure}[t]
	\vspace{1mm}
	\centering
	\subfigure[]
	{\includegraphics[scale=0.5,trim=5 2 17 10,clip]{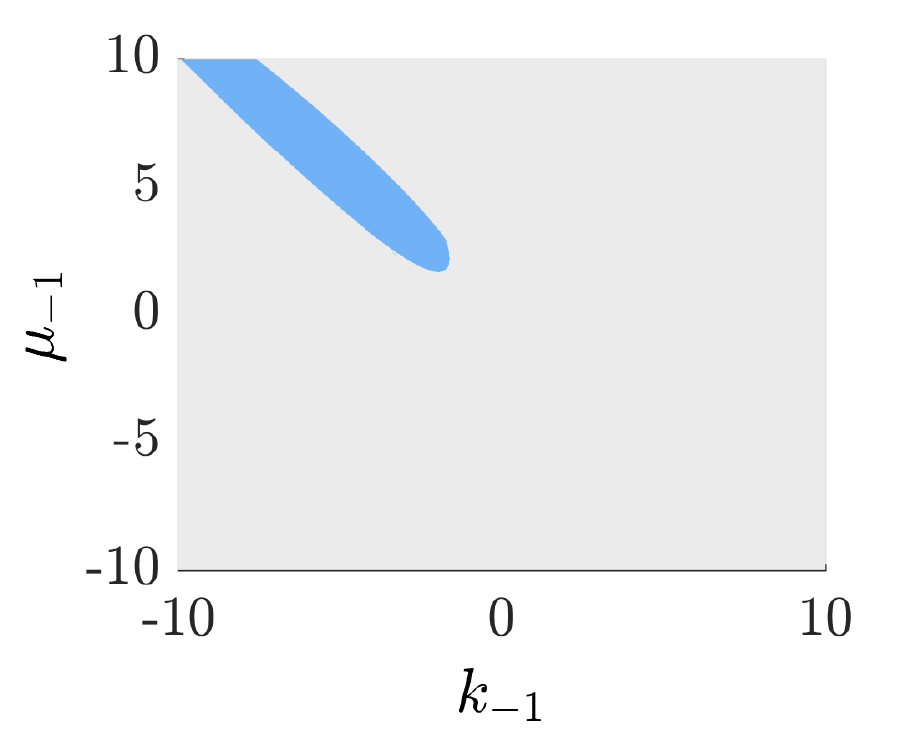}}
	\subfigure[]
	{\includegraphics[scale=0.5,trim=5 2 17 10,clip]{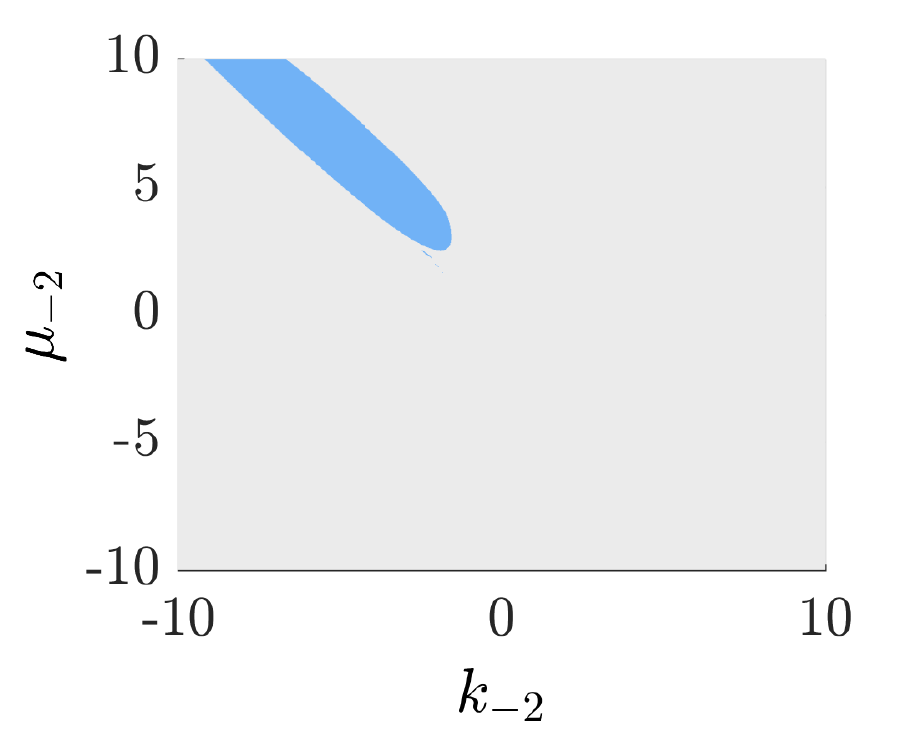}}\\
	%\vspace{-3mm}
	\subfigure[]
	{\includegraphics[scale=0.5,trim=5 2 17 10,clip]{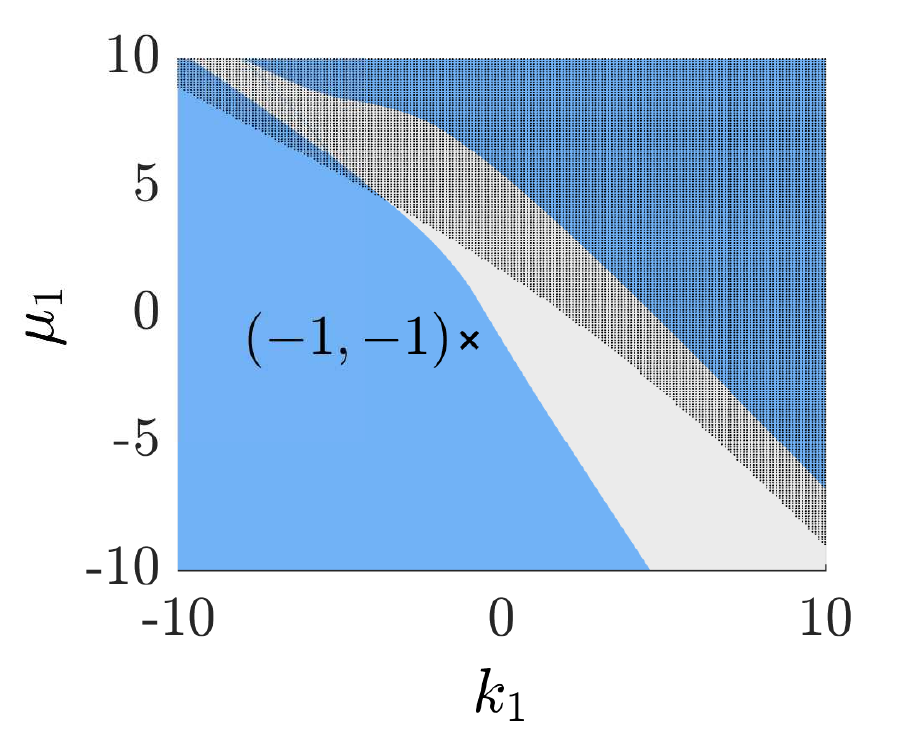}}
	\subfigure[]
	{\includegraphics[scale=0.5,trim=5 2 17 10,clip]{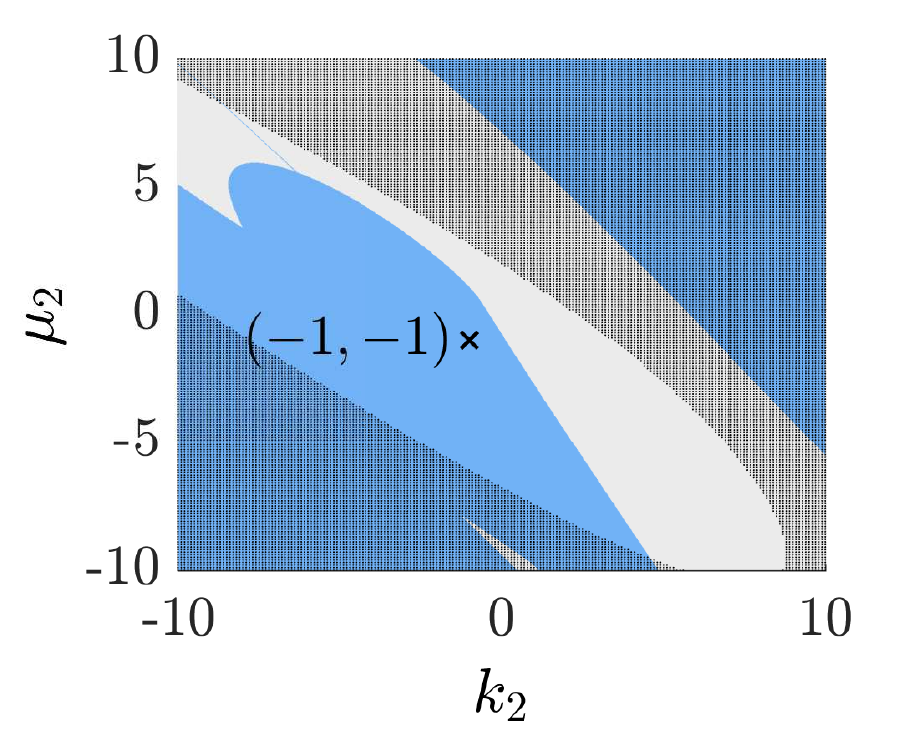}}
	\vspace{-5mm}
	\caption{Head-to-tail string stability charts ($n=m=2$) when the CAV monitors one single vehicle. (a)(b) corresponds to the vehicles ahead, while (c)(d) corresponds to the vehicles behind.}
	\label{Fig:ExperimentA}
	\vspace{-5mm}
\end{figure}

\begin{remark}
	As shown in \eqref{Eq:AheadTransferFunction} and \eqref{Eq:BehindTransferFunction}, the incorporation of either the state of the preceding vehicles or the following vehicles will bring a significant change to the head-to-tail transfer characteristic of mixed traffic flow, but the changes of the two types work in different ways. Much existing research has addressed the head-to-tail string stability of mixed traffic flow when the CAV monitors the motion of the vehicles ahead; see, \eg, \cite{di2019cooperative,jin2014dynamics,zhou2020stabilizing}. However, the influence of the incorporation of the vehicles behind on string stability remains unclear. Moreover, previous research mostly focuses on the transfer function from the head vehicle to the CAV itself, instead of a certain vehicle behind the CAV. It is worth noting that \emph{the perturbations will continue to propagate upstream after reaching the CAV}, which means that although the CAV is exploited to mitigate the perturbations coming from front, they might still be amplified behind the CAV. This determines the necessity of incorporating the motion of the vehicles behind into the CAV's longitudinal control.
\end{remark}

%\vspace{-1mm}
\subsection{Region of Head-to-Tail String Stability}
%\vspace{-1mm}

\begin{figure}[t]
	\vspace{1mm}
	\centering
	\subfigure[]
	{\includegraphics[scale=0.5,trim=5 2 17 10,clip]{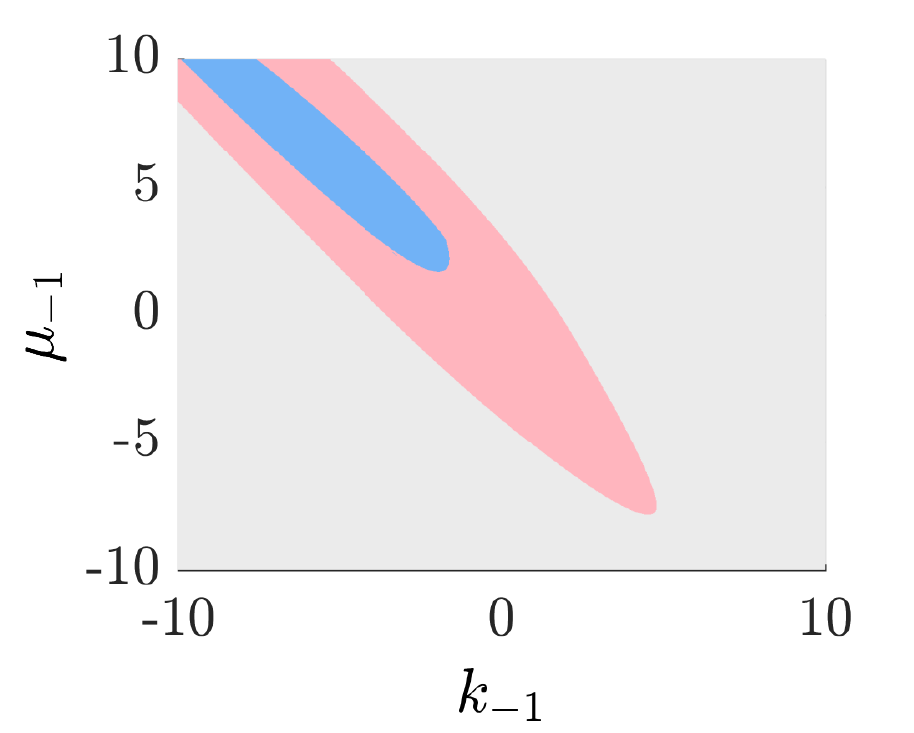}}
	\subfigure[]
	{\includegraphics[scale=0.5,trim=5 2 17 10,clip]{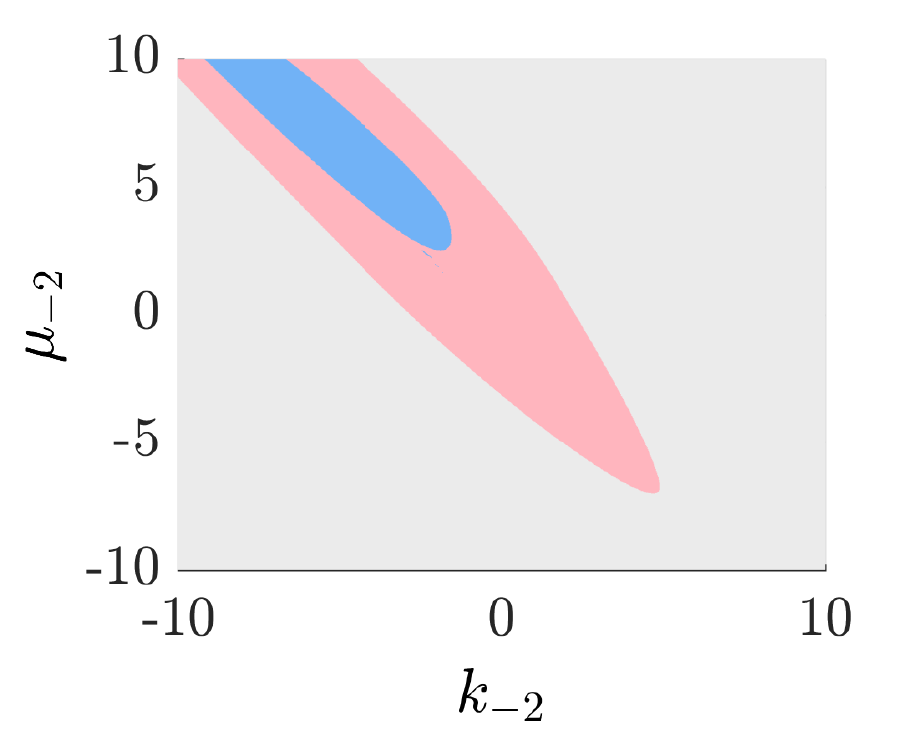}}\\
	%\vspace{-3mm}
	\subfigure[]
	{\includegraphics[scale=0.5,trim=5 2 17 10,clip]{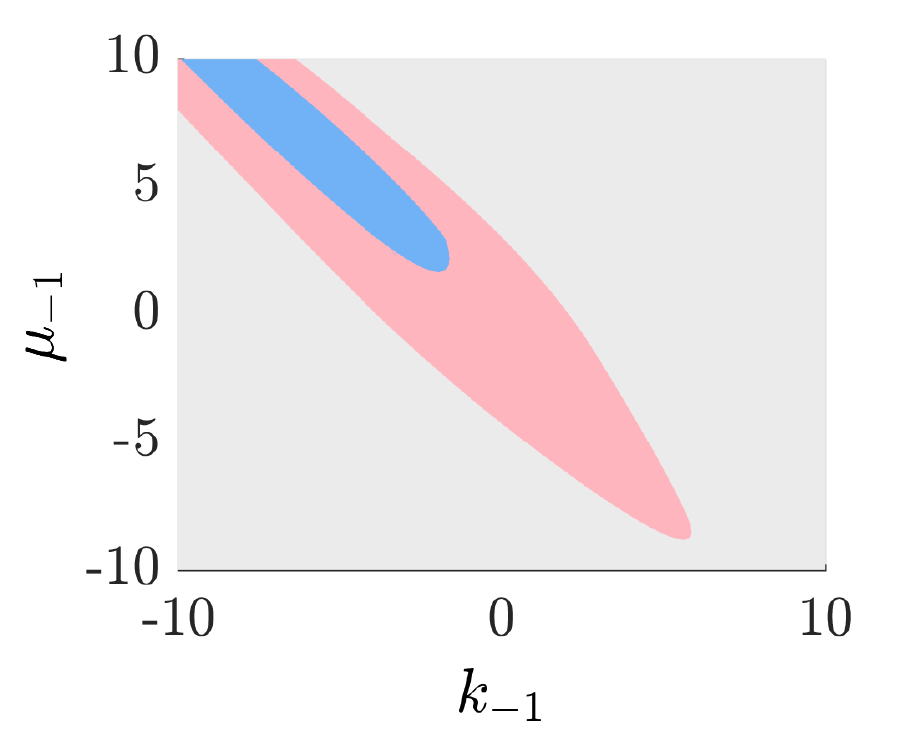}}
	\subfigure[]
	{\includegraphics[scale=0.5,trim=5 2 17 10,clip]{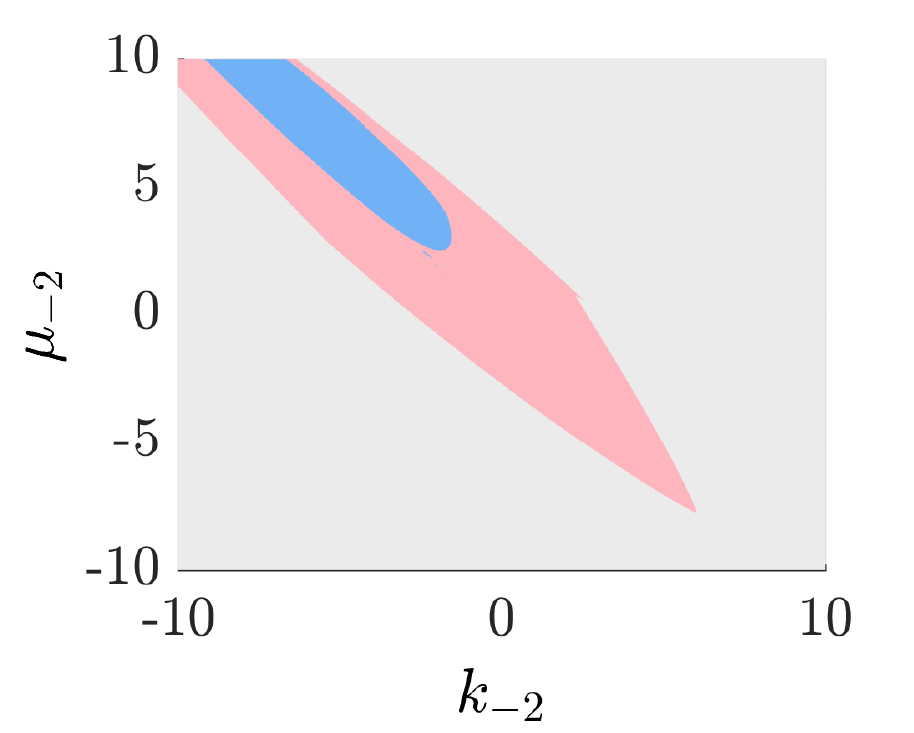}}
	\vspace{-4mm}
	\caption{Head-to-tail string stability charts ($n=m=2$) after incorporation of one vehicle behind. (a)(b) $\mu_1=k_1=-1$. (c)(d) $\mu_2=k_2=-1$.}
	\label{Fig:ExperimentB}
	\vspace{-5mm}
\end{figure}

We proceed to numerically solve the head-to-tail string stable regions of \eqref{Eq:LCCTransferFunction}. The OVM model \eqref{Eq:OVMmodel} is employed, where  $\alpha =0.6, \beta =0.9, v_{\max }=30,s_{\mathrm{st}}=5,s_{\mathrm{go}}=35$. This is a typical setup of HDVs, yielding a local string unstable car-following behavior \cite{jin2017optimal}.

The first numerical study aims to analyze the feedback gains separately in \eqref{Eq:ControlInputDefinition} based on Definition \ref{Def:HeadtoTail}. Precisely, we assume that each time the CAV only monitors one single HDV for feedback control, \ie, the feedback gains of other vehicles are fixed to zeros. The head-to-tail string stable results are demonstrated in the blue domains in Fig.~\ref{Fig:ExperimentA}, and we also shade the areas where plant stability is not guaranteed. As can be clearly observed, the ``looking behind"  string stable areas are apparently larger than the ``looking ahead" ones, representing more string stable choices, despite a constraint on plant stable regions.

The second numerical study aims to investigate the influence of ``looking behind"  on the original string stable areas of ``looking ahead"  feedback gains in Fig.~\ref{Fig:ExperimentA}(a)(b). We consider two fixed cases (marked in Fig.~\ref{Fig:ExperimentA}(c)(d)), and the new results of ``looking ahead"  string stable regions under  $\mu _{1}=-1,k_{1}=-1$  or  $\mu _{2}=-1,k_{2}=-1$  are illustrated in Fig.~\ref{Fig:ExperimentB}. The blue areas denote the string stable regions when monitoring one single preceding vehicle, which remain the same as those in Fig.~\ref{Fig:ExperimentA}(a)(b). Interestingly, after the CAV takes into additional consideration the motion of one following vehicle, either vehicle 1 or vehicle 2, the ``looking ahead" string stable regions witness a significant expansion, which is highlighted in red domains in Fig.~\ref{Fig:ExperimentB}, indicating more feasible ``looking ahead'' feedback gains to ensure head-to-tail string stability.

%\vspace{-0.5mm}
\subsection{Case Study in Mitigating Traffic Perturbations}
%\vspace{-1mm}

The aforementioned two numerical studies reveal that the incorporation of the following vehicles contributes to more string stable options for CAV's feedback policies. We then choose several string stable cases (shown in Table~\ref{Tb:FeedbackGainSetup}) to show the quantificational improvement of the CAV's capability in mitigating perturbations after ``looking behind" . The magnitude of the transfer function \eqref{Eq:LCCTransferFunction} is illustrated in Fig.~\ref{Fig:ExperimentC}(a). In Case A, when the CAV only monitors one preceding vehicle, the magnitude profile is beneath one at all frequencies, indicating string stable performance, compared to the HDV-only scenario. The magnitude witnesses an apparent drop after extra consideration of vehicle 1 (Case B), and continues to decrease after additional incorporation of vehicle 2 (Case C). From the time-domain perspective, Fig.~\ref{Fig:ExperimentC}(b)-(d) shows that the amplitude of velocity fluctuations of the following vehicles becomes smaller from Case A to Case C under a same slight perturbation of the head vehicle. Accordingly, both the frequency-domain and the time-domain observations indicate that the degree to which the CAV mitigates the perturbation becomes higher after ``looking behind"  appropriately. This result reveals another potential of LCC, which could further improve the CAV's capability in dampening traffic waves and smoothing traffic flow, compared with traditional ``looking ahead" only strategies.

\begin{figure}[t]
	\vspace{1mm}
	\centering
	\subfigure[]
	{\includegraphics[scale=0.36,height=3cm]{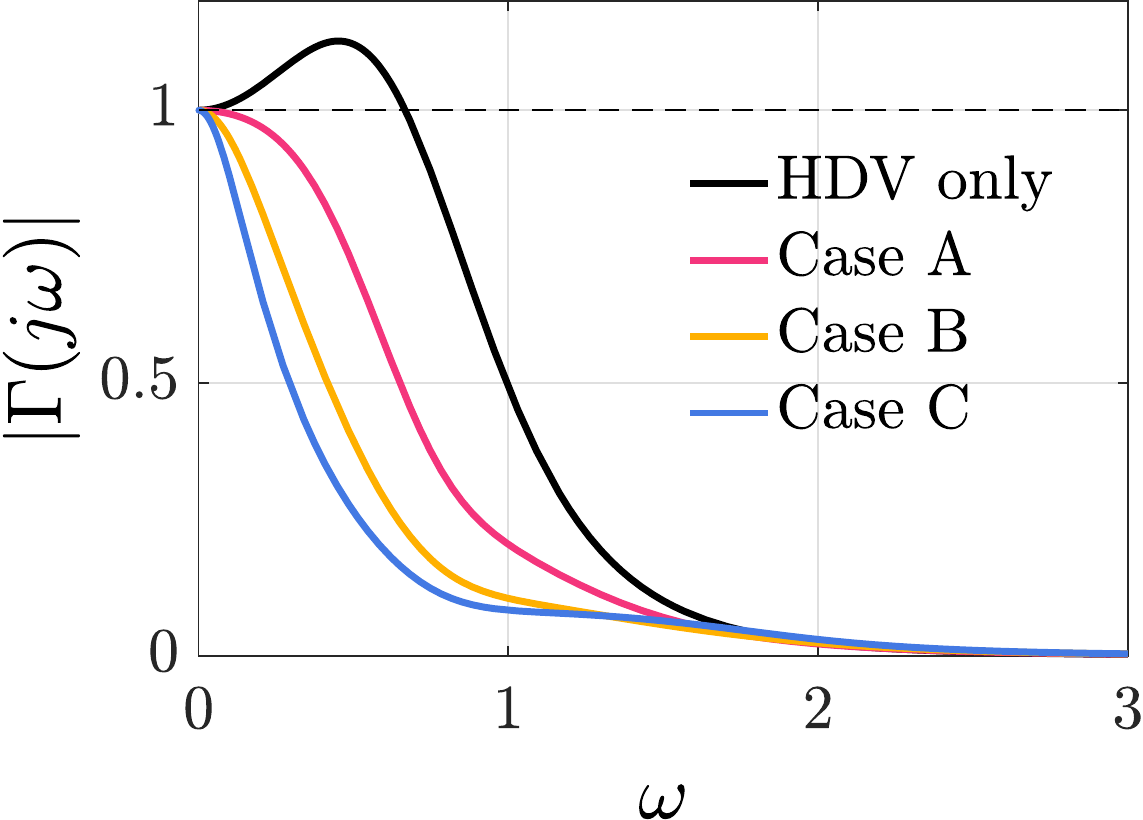}}
	\subfigure[Case A]
	{\includegraphics[scale=0.36,height=3cm]{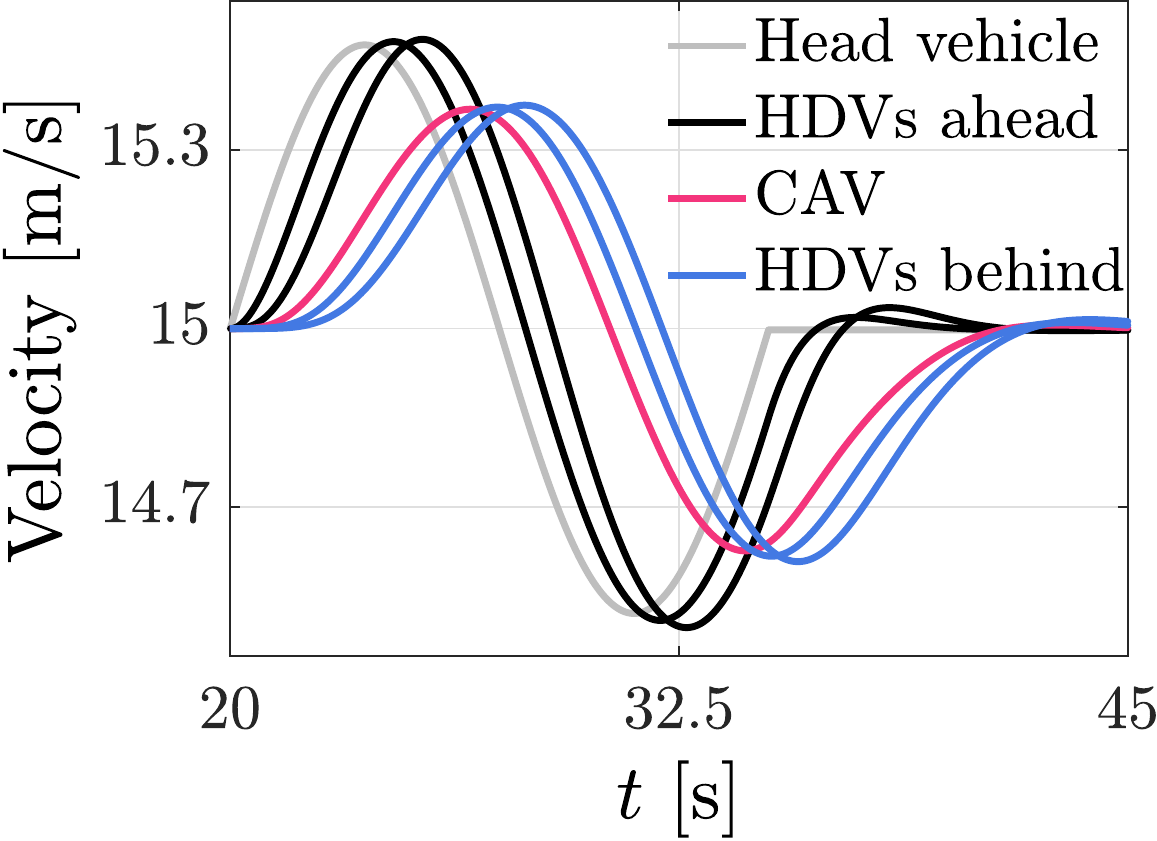}}
	\subfigure[Case B]
	{\includegraphics[scale=0.36,height=3cm]{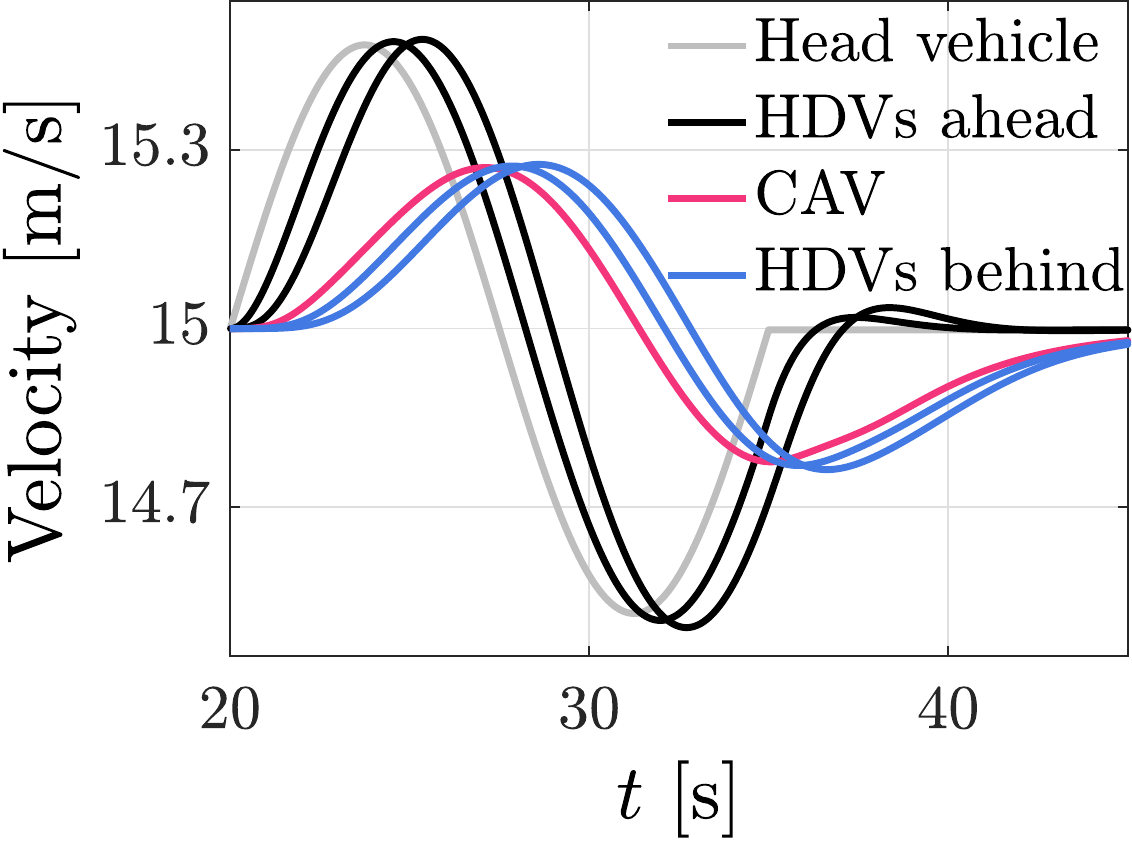}}
	\subfigure[Case C]
	{\includegraphics[scale=0.36,height=3cm]{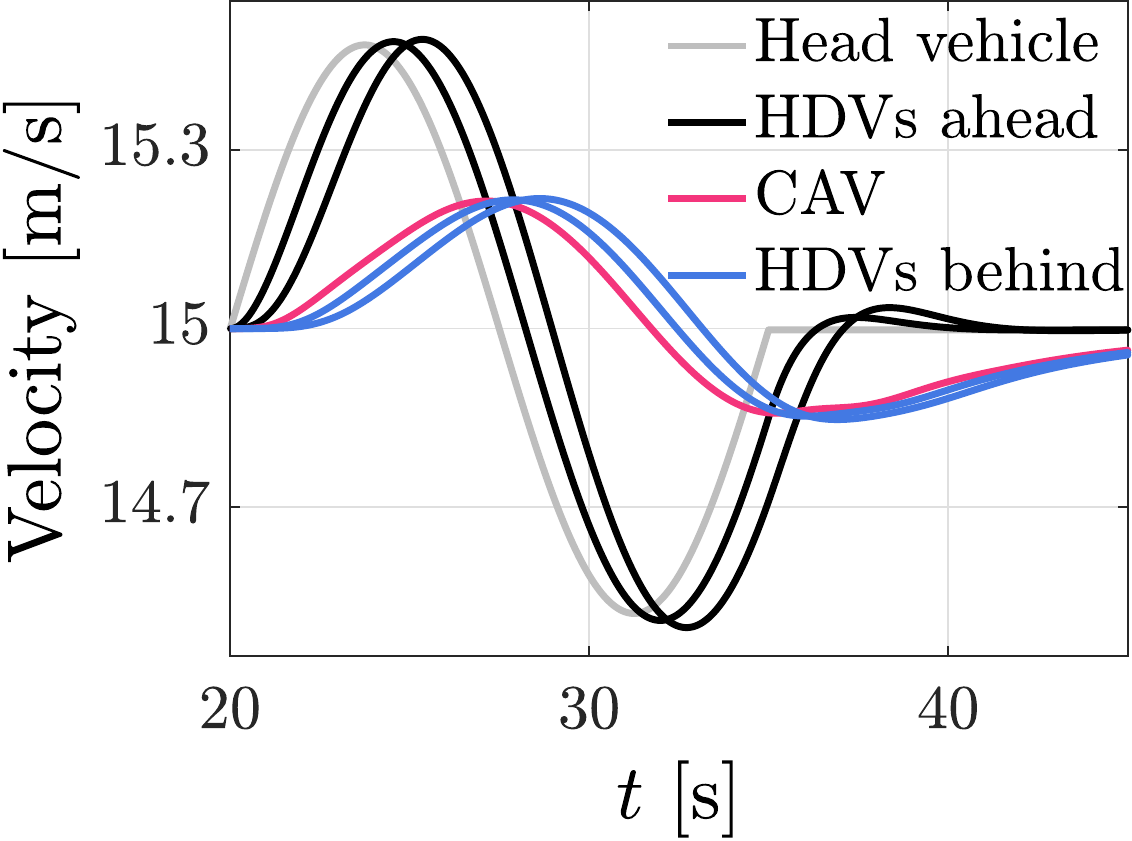}}
	\vspace{-2mm}
	\caption{Response profile when $n=m=2,\mu_{-2}=k_{-2}=0$. (a) Frequency-domain response: magnitude of transfer function \eqref{Eq:LCCTransferFunction}. (b)(c)(d) Time-domain response: velocity profile of each vehicle. }
	\label{Fig:ExperimentC}
\end{figure}

\begin{table}[t]
	\vspace{-2mm}
	\begin{center}
		\caption{Parameter Setup in Feedback Gains}\label{Tb:FeedbackGainSetup}
		\begin{tabular}{ccccccc}
			& $\mu_{-1}$ & $k_{-1}$ & $\mu_{1}$ & $k_{1}$ & $\mu_{2}$ & $k_{2}$  \\\hline
			Case A & 3 & -3 & 0  & 0 & 0 & 0\\
			Case B & 3 & -3 & -1  & -1 & 0 & 0\\
			Case C & 3 & -3 & -1  & -1 & -1 & -1\\
			\hline
		\end{tabular}
	\end{center}
	\vspace{-7mm}
\end{table}

\vspace{-0.5mm}
\section{Conclusions}
\vspace{-0.5mm}
\label{Sec:5}
In this paper, we have introduced the notion of Leading Cruise Control (LCC). Based on the dynamical model of the general LCC system and two special cases, we prove the controllability of the state of the vehicles behind through CAV's active action, and investigate the head-to-tail string stability of the proposed framework. These results reveal a great potential of incorporating vehicles behind into CAV's control, which could take full advantage of V2V connectivity and vehicular automation, contributing to a further improvement to the mixed traffic flow where HDVs and CAVs coexist in the near future. Future directions include designing more sophisticated controllers for LCC and analyzing the influence of behavior uncertainty and reaction time of human drivers on the LCC performance.

\addtolength{\textheight}{-12cm}   % This command serves to balance the column lengths
                                  % on the last page of the document manually. It shortens
                                  % the textheight of the last page by a suitable amount.
                                  % This command does not take effect until the next page
                                  % so it should come on the page before the last. Make
                                  % sure that you do not shorten the textheight too much.

%%%%%%%%%%%%%%%%%%%%%%%%%%%%%%%%%%%%%%%%%%%%%%%%%%%%%%%%%%%%%%%%%%%%%%%%%%%%%%%%

%%%%%%%%%%%%%%%%%%%%%%%%%%%%%%%%%%%%%%%%%%%%%%%%%%%%%%%%%%%%%%%%%%%%%%%%%%%%%%%%

%%%%%%%%%%%%%%%%%%%%%%%%%%%%%%%%%%%%%%%%%%%%%%%%%%%%%%%%%%%%%%%%%%%%%%%%%%%%%%%%
%\section*{APPENDIX}

%Appendixes should appear before the acknowledgment.

% \section*{Acknowledgment}

%The preferred spelling of the word Òacknowledgmentin America is without an Òeafter %the Òg Avoid the stilted expression, ÒOne of us (R. B. G.) thanks . . . Instead, %try ÒR. B. G. thanks Put sponsor acknowledgments in the unnumbered footnote on the %first page.

%%%%%%%%%%%%%%%%%%%%%%%%%%%%%%%%%%%%%%%%%%%%%%%%%%%%%%%%%%%%%%%%%%%%%%%%%%%%%%%%

\bibliographystyle{IEEEtran}
\bibliography{IEEEabrv,mybibfile_short}

\end{document}